\documentclass[11pt]{article}
\usepackage[utf8]{inputenc}
\usepackage[T1]{fontenc}
\usepackage[dvipsnames]{xcolor}
\usepackage{url}
\usepackage{float}
\usepackage{booktabs}
\usepackage{amsfonts}
\usepackage{amsmath}
\usepackage{amssymb}
\usepackage{nicefrac}
\usepackage{tikz}
\usepackage{pgfplots}
\usepgfplotslibrary{patchplots}
\pgfplotsset{compat=1.18}
\usepackage{microtype}
\usepackage{tcolorbox}
\usepackage{enumitem}
\usepackage{graphicx}
\usepackage[ruled]{algorithm2e}

\usepackage[switch]{lineno}
\usepackage[section]{placeins}
\usepackage{dirtytalk}
\usepackage{mathtools}
\usepackage{amsthm}
\usepackage{authblk}
\usepackage{natbib}
\usepackage{geometry}
\usepackage[colorlinks = true, linkcolor=NavyBlue,citecolor=ForestGreen]{hyperref}
\usepackage[nameinlink]{cleveref}

\newtheorem{theorem}{Theorem}[section]

\newtheorem{lemma}[theorem]{Lemma}
\newtheorem{corollary}[theorem]{Corollary}
\theoremstyle{definition}
\newtheorem{definition}[theorem]{Definition}
\newtheorem{assumption}[theorem]{Assumption}
\theoremstyle{remark}
\newtheorem{remark}[theorem]{Remark}
\newtheorem{claim}{Claim}
\newtheorem{fact}{Fact}

\crefname{remark}{remark}{remarks}
\Crefname{remark}{Remark}{Remarks}

\title{Improved Last-iterate Convergence Properties\\ for the FLBR-MWU Dynamics}

\author[1]{Michail Fasoulakis}
\author[2,3,4]{Evangelos Markakis}
\author[5]{Giorgos Roussakis}
\author[2,3]{Christodoulos Santorinaios}

\affil[1]{Royal Holloway, University of London, UK}
\affil[2]{Athens University of Economics and Business, Greece}
\affil[3]{Archimedes/Athena RC, Greece}
\affil[4]{Input Output Group (IOG), Greece}
\affil[5]{Foundation for Research and Technology--Hellas, Greece}

\hypersetup{
pdftitle={Improved Last-iterate Convergence Properties for the FLBR-MWU Dynamics},
pdfsubject={cs.GT},
pdfauthor={Micheal Fasoulakis, Evangelos Markakis, Giorgos Roussakis, Christodoulos Santorinaios},
pdfkeywords={FLBR-MWU, last-iterate},
}

\begin{document}
\maketitle

\begin{abstract}
	We revisit a variant of Multiplicative Weights Update (MWU), defined recently by \citet{FMPV22}, and denoted as Forward Looking Best Response MWU (FLBR-MWU). These dynamics are based on the approach of extra-gradient methods, with the tweak of using  different learning rates in the intermediate step and the actual update step. So far, it has been proved that this algorithm attains asymptotic last-iterate convergence but no explicit rate has been known. We answer the open question from Fasoulakis et al. by establishing a concrete convergence rate for the duality gap. In particular, we show a geometric convergence rate, of the form $O(c^t)$, where $c<1$ is independent of time but dependent on game parameters, such as the maximum eigenvalue of the Jacobian matrix.
We also complement our theoretical analysis with an experimental comparison to OGDA (Optimistic Gradient Descent-Ascent), which ranks among the best last-iterate methods for solving zero-sum games. We demonstrate that the performance of the FLBR-MWU method matches or, in some cases, outperforms OGDA.
\end{abstract}

\newpage

\section{Introduction}
Our work focuses on learning algorithms with convergence guarantees in two-player bilinear zero-sum games. 
This is by now an extensively studied domain, spanning a few decades of research progress already. Given a game described by its payoff matrix, what we are after here is algorithms that eventually reach a Nash equilibrium, from which no player has an incentive to deviate. Some of the earlier and standard results in this area concern convergence {\it on average}. I.e., it has long been known that by using no-regret algorithms, the empirical average of the players' strategies over time converges to a Nash equilibrium in zero-sum games and to more relaxed equilibrium notions (coarse correlated equilibria) for general games \citep{freund1999adaptive}.

In recent years, the attention of the relevant community has gradually shifted from convergence on average to the more robust notion of {\it  last-iterate convergence}, a property highly desirable from an application perspective. This means that the strategy profile $(x^t,y^t)$, reached at iteration $t$ of an iterative algorithm, converges to the actual equilibrium as $ t \to \infty$. 
Unfortunately, many of the initially developed methods do not satisfy this property. No-regret algorithms, like the Multiplicative Weights Update (MWU) method, are known to converge only in an average sense.
In fact, it was shown in \citet{BP18,MPP18} that several MWU variants do not satisfy last-iterate convergence.

Motivated by these considerations, the last decade has seen a series of works studying last-iterate convergence. The majority of these works have focused on the fundamental class of zero-sum games. Zero-sum games have played an important role in the development of game theory and optimization, and more recently, there has also been a renewed interest, given their relevance in formulating GANs in deep learning \citep{GPMXWOCB14}. The positive results that have been obtained for zero-sum games show that improved variants of Gradient Descent such as 
the Optimistic Gradient Descent/Ascent method (OGDA), or the Extra-Gradient method (EG) attain last iterate convergence. Several other methods have also been obtained and compared to each other with respect to their convergence rate. Overall, one can say that we now have a much better understanding of the learning dynamics that converge in zero-sum games.

Despite the positive progress, however, several important questions still remain unanswered. First, it is often difficult to have tight bounds in analyzing such learning algorithms. Furthermore, even for bilinear, zero-sum games, the best attainable rate of convergence is not yet fully understood. The currently best rate that is applicable to all such games is $O({1}/{\sqrt{t}})$ in terms of the duality gap \citep{COZ22,GorbunovTG22}, where the hidden terms in the $O(\cdot)$ notation depend on the game dimension but not on the payoff matrix. In fact this also holds for the more general class of convex-concave min-max optimization problems. 
It is conceivable though that better rates could be achieved for bilinear games. The work of \citet{Wei2021LinearLC} establishes a geometric convergence rate of $O(c^t)$ ($c<1$) for the OMWU method, discussed further in the sequel, albeit with game-dependent parameters within the  $O(\cdot)$ term. It remains an open problem whether a geometric convergence rate can be achieved where the dependence is only on the game dimension.

\subsection{Our contributions}

We revisit a promising variant of MWU, defined recently in \citet{FMPV22} and denoted as Forward Looking Best-Response Multiplicative Updates (FLBR-MWU), and referred to from now on as FLBR for brevity. The dynamics are based on the approach of extra-gradient methods, with the tweak of using a different and more aggressive learning rate in the intermediate step. 
Our main contributions can be summarized as follows: 
\begin{itemize}[noitemsep,leftmargin=*]
    \item So far, it was only known that the FLBR algorithm attains asymptotic last-iterate convergence for bilinear zero-sum games, but without any explicit rate. We answer the open question from \citet{FMPV22} by establishing concrete rates of convergence. Using the duality gap as our metric, we first show a geometric rate, of the form $O(c^t)$, until we reach an approximate Nash equilibrium, for an appropriate level of approximation. More precisely, the parameter $c< 1$ is independent of the time $t$, but depends on the game and on its dimension. 
    \item For games with a unique Nash equilibrium, we further prove that once we reach an approximate equilibrium, the duality gap keeps getting decreased with a geometric rate, until the exact equilibrium solution. The caveat here is that there is again a dependence on the game parameters, namely on the Jacobian matrix evaluated at the equilibrium. An analogous result also holds for the OMWU method \citep{Wei2021LinearLC}, which is a different variant of MWU, as mentioned earlier, but w.r.t. the KL divergence, and with a different dependence on the game parameters. We view as advantages of our analysis that it yields a simpler, shorter and more intuitive proof compared to the lengthier analysis for OMWU in \citet{Wei2021LinearLC}, 
    
    Interestingly, our proof utilizes some ideas from the analysis of the Arimoto-Blahut algorithm (for computing Shannon’s capacity of a discrete memoryless channel), highlighting connections to a neighboring field.
    \item We also investigate further properties of FLBR. In  \Cref{sec:regret}, we prove that it is not a no-regret algorithm, which was not known before. At the same time, in our concluding section, we explore aspects of {\it forgetfulness}, as introduced recently in \citet{caiforget}. We show that in contrast to OMWU, FLBR seems to exhibit forgetfulness, which serves as an indication for fast performance.
    \item Finally, we perform an experimental comparison of FLBR against OGDA, which is among the best known methods for solving zero-sum games\footnote{We focus on the comparison against OGDA and not OMWU since the latter is not as competitive in practice (observed also in other recent works).}. The results reveal that FLBR is generally competitive with OGDA, and in some cases outperforms it.
    \end{itemize}
Overall, we believe that our work provides a more complete treatment on the power and limitations of the FLBR method for bilinear games. The main message that emerges from our work is that even though the theoretical analysis of FLBR is along the same spirit as OMWU (i.e., having a game-dependent geometric rate), its experimental performance is superior and creates potential for future practical adaptation  (in contrast to OMWU, which is known not to perform well in practice).

\subsection{Related work}
There is a vast literature on solving zero-sum games. We focus here on the most relevant works to ours and provide further discussion in \Cref{appsec:relwork}.

Within the last years, there has been great interest in designing fast learning algorithms for zero-sum games. Although this direction started several decades ago, e.g. with the fictitious play algorithm \citep{brown1951iterative,R51}, it has received significant attention more recently given the relevance to formulating GANs in deep learning \citep{GPMXWOCB14} and also other applications in machine learning. 
Some of the earlier and standard results in this area concern convergence {\it on average}. That is, it has been known that by using no-regret algorithms, such as the Multiplicative Weights Update (MWU) methods \citep{AHK12}, the empirical average of the players' strategies over time converges to a Nash equilibrium in zero-sum games. Similarly, one could also utilize Gradient Descent/Ascent (GDA) algorithms.
Several other algorithms for zero-sum games are built within the framework of regret minimization both in theory \citep{CJST19,CJJS24} and in applications \citep{FarinaKS21}.

Coming closer to our work, within the last decade, there has also been a great interest in algorithms attaining the more robust notion of {\it  last-iterate convergence}. This means that the strategy profile $(x^t,y^t)$, reached at iteration $t$, converges to the actual equilibrium as $ t \to \infty$. 
Negative results in \citet{BP18} and \citet{MPP18} show that several no-regret algorithms, such as many MWU as well as GDA variants, do not satisfy last-iterate convergence. Instead, they may diverge or enter a limit cycle. Motivated by this, there has been a series of works on obtaining algorithms with provable last iterate convergence. The positive results that have been obtained for zero-sum games show that improved versions of Gradient Descent such as the Extra-Gradient method \citep{K76} or the Optimistic Gradient method \citep{P80} attain last-iterate convergence. In particular, \citet{daskalakis2018training} and \citet{DBLP:conf/aistats/LiangS19} show that the optimistic variant of GDA (referred to as OGDA) converges for zero-sum games. Analogously, OMWU (the optimistic version of MWU) also attains last-iterate convergence, as shown in \citet{Daskalakis2019LastIterateCZ} and further analyzed in \citet{Wei2021LinearLC}. Further approaches with convergence guarantees have also been proposed, such as primal-dual hybrid gradient methods \citep{LY23}.
For the case of constrained bilinear zero-sum games, the best convergence rate for the duality gap achieved so far is by \citet{COZ22,GorbunovTG22}, which is $O({1}/{\sqrt{t}})$. We note that better rates are achievable for the case of unconstrained bilinear zero-sum games, as e.g., in \citet{mokhtari2020unified}, but this is an easier problem than what we focus on here.
We also note that for the metric of KL divergence, \citet{Wei2021LinearLC} provide a geometric rate, which is dependent on game parameters.

The method we analyze here is inspired by the general approach of extra-gradient methods, but with the tweak of using different learning rates in the intermediate and final step of each iteration. The idea of using different rates in these two steps of each iteration has also been successful in other recent works. It has been used in \citet{Azizian} for a model that concerns the unconstrained bilinear case.
Again for the unconstrained case (but even beyond convex-concave functions), the work of \citet{DBLP:conf/aistats/DiakonikolasDJ21} showed how the use of different learning rates achieves convergence guarantees for their method (referred to as EG+). These ideas have also been applied successfully in the stochastic setting, under noisy gradient feedback,  \citep{HsiehIMM20}.

Several of these methods have also been studied beyond bilinear payoff functions or beyond zero-sum games, including \citep{GPD20} and also \citep{DBLP:conf/aistats/DiakonikolasDJ21} where positive results are shown for a class of non-convex and non-concave problems. 
There are also negative results however as e.g., established in \citet{DSZ21}. Going beyond min-max problems, the work of \citet{PP24} obtains last-iterate convergence rates in rank-1 games. Results for richer classes of games are provided in \citet{anagnostides2022last}, including potential and constant-sum polymatrix games. The landscape, however, is overall less clear. 

\section{Preliminaries}\label{sec:prelim}
We consider two-player $n \times n$ zero-sum games $(R,-R)$. Without loss of generality, we consider that $R\in [0,1]^{n \times n}$ is the payoff matrix of the row player, and $-R$ is the payoff matrix of the column player.\footnote{Any game can be transformed to a game with entries in the interval $[0,1]$ with the same Nash equilibria.}
A mixed strategy is a probability distribution $x = (x_1,\dots, x_n)^\top $ over the standard simplex $\Delta_n$, 
We also denote by $e_i$ the distribution corresponding to a pure strategy $i$, with 1 in the index $i$ and zero elsewhere. 
The support of a mixed strategy $x$ is the set of the pure strategies to which $x$ assigns positive mass, i.e. $supp(x) = \{i| x_i > 0\}$.

A strategy profile is a tuple $(x, y)$, where $x$ (resp. $y$) is the strategy of the row (resp. column) player. Given a profile $(x,y)$, the expected payoff of the row (resp. column) player is $x^\top  Ry$ (resp $-x^\top  Ry$).  

\begin{definition}[$\varepsilon$-Nash equilibrium ($\varepsilon$-NE)]\label[definition]{def:approx-Nash}
A strategy profile $(x,y)$ is an $\varepsilon$-Nash equilibrium of the game $(R,-R)$, with $R\in [0,1]^{n \times n}$, for $\varepsilon \in [0,1]$, if and only if, for any $i,j\in [n]$,
\[
{x}^\top  Ry +\varepsilon\geq e_i^\top Ry, \text{ and } {x}^\top  Ry -\varepsilon\leq {x}^\top  Re_j.
\]
\end{definition}
\noindent By setting $\varepsilon=0$ we have an exact NE.
Another way to view $\epsilon$-NE is that the players are playing approximate best response strategies. 
A strategy $x$ is a $\rho$-best-response strategy against $y$ for the row player, for $\rho \in [0,1]$, if and only if $x^\top R y +\rho \geq e^\top_j R y$, for any $j\in [n]$. Similarly, a strategy $y$ for the column player is a $\rho$-best-response strategy against some strategy $x$ of the row player if and only if $x^T R y \leq x^T R e_i + \rho$, for any $i\in [n]$. 

Next we will define our progress measure.
\begin{definition}[Duality Gap]\label[definition]{def:V}
For zero-sum games, the duality gap function $V$ is defined as 
\[ V(x,y) = \max_i e_i^\top Ry-\min_j x^\top  Re_j. \]
\end{definition}
The duality gap is a central notion in game theory as it captures the combined loss of the players for not employing best responses and hence for deviating from a NE, as seen in the fact below.
\begin{fact}\label[fact]{Nash from V}
A strategy profile $(x^*,y^*)$ is a Nash equilibrium of a zero-sum game if and only if it is a (global) minimum of the function $V(x,y)$. Furthermore, if $V(x,y) \leq \varepsilon$, then $(x,y)$ is an $\varepsilon$-NE.
\end{fact}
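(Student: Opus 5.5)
The natural route is through weak duality: I will write $V$ as a sum of two nonnegative "regret" terms and read off both claims of \Cref{Nash from V}. Let $v^*$ be the value of the game, which exists by von Neumann's minimax theorem. Split
\[
V(x,y)=\Big(\max_i e_i^\top Ry - x^\top Ry\Big)+\Big(x^\top Ry-\min_j x^\top Re_j\Big).
\]
Each bracket is nonnegative, because a maximum over pure strategies dominates any mixture, and a minimum is dominated by any mixture. Hence $V(x,y)\ge 0$ for every profile.

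For the characterization, I would first handle the forward direction. If $(x^*,y^*)$ is a NE, then \Cref{def:approx-Nash} with $\varepsilon=0$ says exactly that both brackets vanish. So $V(x^*,y^*)=0$, which is the global minimum since $V\ge 0$.

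For the converse, I need that the minimum value of $V$ is $0$, i.e.\ that some profile attains $0$. This is the one place an external ingredient enters: existence of a NE, by Nash's theorem or the minimax theorem. Granting it, any global minimizer satisfies $V=0$, so both nonnegative brackets vanish, and that is precisely the $\varepsilon=0$ condition of \Cref{def:approx-Nash}. An equivalent phrasing is that $\max_i e_i^\top Ry\ge v^*\ge \min_j x^\top Re_j$, with equality exactly at the maximin and minimax strategies.

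For the approximate statement, suppose $V(x,y)\le\varepsilon$. Since the two brackets are nonnegative and sum to at most $\varepsilon$, each is at most $\varepsilon$. The first bound gives $x^\top Ry+\varepsilon\ge e_i^\top Ry$ for all $i$. The second gives $x^\top Ry-\varepsilon\le x^\top Re_j$ for all $j$. These are the defining inequalities of an $\varepsilon$-NE. If $\varepsilon>1$ the statement is vacuous given payoffs in $[0,1]$, so the range $\varepsilon\in[0,1]$ in the definition causes no trouble.

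The only nontrivial step is showing that the minimum of $V$ equals $0$ rather than merely being nonnegative. For this I rely on the classical minimax theorem and do not reprove it. Everything else is the decomposition above.
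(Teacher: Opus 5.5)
Your proof is correct. The paper states this as a standard fact and gives no proof, so there is no argument of the paper's to compare against. Your split of $V$ into the two nonnegative regret terms $\max_i e_i^\top Ry - x^\top Ry$ and $x^\top Ry - \min_j x^\top Re_j$ is the usual justification. You also correctly identify that only the converse direction needs an external ingredient, namely existence of an equilibrium: without it, a global minimizer of $V$ could a priori have positive value.

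Elsewhere the paper implicitly works with a different split, around the game value $v$. This appears in the lemma bounding $\max_i e_i^\top Ry$ by $\|y-y^*\|_1+v$ and in the last step of the geometric-rate theorem. The split is $V=(\max_i e_i^\top Ry - v)+(v-\min_j x^\top Re_j)$. It makes the characterization "$V=0$ iff $x$ is maximin and $y$ is minimax" immediate, but it already needs the minimax theorem to show each term is nonnegative. Your split around $x^\top Ry$ gives nonnegativity and the $\varepsilon$-NE inequalities directly, and uses the minimax theorem only for the existence of a zero of $V$.

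A small wording point: for $\varepsilon>1$ the claim is trivially true rather than vacuous. Since $V\le 1$ for payoffs in $[0,1]$, the hypothesis always holds and so do the required inequalities; that case simply lies outside the range of the definition.
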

Before proceeding with the dynamics, we state a simple lemma that relates the $L_1$ norm with the duality gap function.
\begin{lemma}\label[lemma]{lem:norm to Duality gap}
For any $x, y$ it holds that $\max_i e^\top _iRy  \leq ||y-y^*||_1 + v$ and $\min_j x^\top  Re_j  \geq ||x-x^*||_1 + v$, where $v$ is the value of the zero-sum game w.r.t. the row player.

\end{lemma}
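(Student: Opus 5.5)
The plan is to fix an arbitrary Nash equilibrium $(x^*,y^*)$ of $(R,-R)$ with value $v$. For each inequality, I will write the relevant payoff as the payoff against the equilibrium strategy plus a perturbation term, and bound the perturbation by the $L_1$ distance using $R\in[0,1]^{n\times n}$.

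\textbf{Step 1 (equilibrium facts).} Since $(x^*,y^*)$ is a NE of a zero-sum game with value $v$, we have $e_i^\top R y^* \le v$ for every $i$, because the row player cannot gain by deviating. Likewise $x^{*\top} R e_j \ge v$ for every $j$, because the column player cannot lower the row payoff by deviating. This is \Cref{def:approx-Nash} with $\varepsilon=0$ together with $x^{*\top}Ry^*=v$.

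\textbf{Step 2 (row player's best response).} For any $i$ and any $y$, I write
\[
e_i^\top R y = e_i^\top R y^* + e_i^\top R (y-y^*) .
\]
The perturbation satisfies
\[
|e_i^\top R(y-y^*)| \le \sum_j R_{ij}\,|y_j-y^*_j| \le \|y-y^*\|_1,
\]
since $0\le R_{ij}\le 1$. Combining with Step 1 gives $e_i^\top R y \le v+\|y-y^*\|_1$, and taking the maximum over $i$ yields the first inequality.

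\textbf{Step 3 (column player's best response).} Symmetrically, I write
\[
x^\top R e_j = x^{*\top} R e_j + (x-x^*)^\top R e_j \ge v - \|x-x^*\|_1 .
\]
Taking the minimum over $j$ gives $\min_j x^\top R e_j \ge v-\|x-x^*\|_1$.

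A remark on the sign in the second inequality. As printed, the statement reads $\min_j x^\top R e_j \ge \|x-x^*\|_1+v$, which cannot hold in general: at $x=x^*$ it would force $\min_j x^{*\top}Re_j$ to be at least $v$ with no slack, and for $x\neq x^*$ it would force it strictly above $v$. I would therefore prove the intended bound $v-\|x-x^*\|_1$, which is the form needed to conclude $V(x,y)\le \|x-x^*\|_1+\|y-y^*\|_1$ by subtracting the two inequalities.

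There is no real obstacle here; the only points to watch are this sign and the use of $R\in[0,1]^{n\times n}$. Optionally, the constant can be sharpened to $\tfrac12\|\cdot\|_1$, since $y-y^*$ has zero sum and one may shift $R$'s row by a constant, but this is not needed.
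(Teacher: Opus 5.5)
Your proof is correct and follows the paper's argument: you bound $e_i^\top R(y-y^*)$ by $\|y-y^*\|_1$ using $R\in[0,1]^{n\times n}$, add the equilibrium inequality $e_i^\top Ry^*\le v$, and argue symmetrically for the column player. You are also right about the sign, since the printed second inequality fails for every $x\neq x^*$ (because $\min_j x^\top Re_j\le v$ always), and the bound $\min_j x^\top Re_j\ge v-\|x-x^*\|_1$ you prove is what the paper's ``similar manner'' step actually yields and what is later used to get $V(x^t,y^t)\le 2K\delta\lambda^{t-t_0}$.
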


\subsection{FLBR-MWU dynamics}

Here we restate the Forward Looking Best-Response Dynamics as introduced in \citet{FMPV22}. These dynamics follow an extra-gradient approach to find a Nash Equilibrium. Specifically, each iteration involves an intermediate step that serves as a prediction for the update step. The difference with other extra-gradient-like approaches is that different learning rates are used in the intermediate and the final step, which appears crucial to the effectiveness of this approach.

Given an initial strategy profile $(x^0,y^0)$, the two steps of the dynamics can be described as follows:

    \begin{align*}
\text{Intermediate step:\ } &{\hat{x}_i}^{t} = x_i^{t-1}\cdot \frac{e^{\xi \cdot e_i^\top  R{y}^{t-1} }}{\sum\limits_{j} x_j^{t-1}\cdot e^{\xi \cdot e_j^\top  R{y}^{t-1}}},\\  &  {\hat{y}}_j^{t} = y_j^{t-1}\cdot \frac{e^{-\xi \cdot  e_j^\top  R^\top  {x}^{t-1}}}{\sum\limits_{i} {y}_i^{t-1}\cdot e^{-\xi \cdot e_i^\top  R^\top  {x}^{t-1}}} \\
\text{Update step:\ } &x_i^t = x_i^{t-1}\cdot \frac{e^{\eta \cdot e_i^\top  R\hat{y}^{t}}}{\sum\limits_{j} x_j^{t-1}\cdot e^{\eta \cdot  e_j^\top  R\hat{y}^{t} }}, \\ &y_j^t = y_j^{t-1}\cdot \frac{e^{-\eta \cdot e_j^\top  R^\top  \hat{x}^{t} }}{\sum\limits_{i} {y}_i^{t-1}\cdot e^{-\eta \cdot e_i^\top  R^\top  \hat{x}^{t} }}.
\end{align*}

When $\xi=\eta$ in the above steps, this is referred to as Mirror-Prox in \cite{nemirovski2004prox}. Contrary to the conventional wisdom of using rather small learning rates to ensure contraction, our approach utilizes a large value for $\xi$ (aggressive rate for the intermediate exploration step) coupled with a small (conservative) learning rate $\eta \in (0,1)$ for the update step.   
Finally, we state an important property that we will use at various points in the sequel: 

\begin{lemma}[\citet{FMPV22}]\label[lemma]{fact:best-response}
    For any $t>0$, it holds that as $\xi \to \infty$, $\hat{x}^t$ (resp. $\hat{y}^t$) converges to the proportional best response strategy $\Tilde{x}^t$ against $y^{t-1}$ (resp. $\Tilde{y}^t$ against $x^{t-1}$). Specifically, if $B_x$ (resp. $B_y$) is the set of pure best responses against $y^{t-1}$ (resp. $x^{t-1}$), we have
    \[
    \Tilde{x}_i^t = \begin{cases}
        \frac{x_i^{t-1}}{\sum\limits_{j \in B_x} x_j^{t-1}},\ &i \in B_x, \\ \quad 0, &i \not\in B_x,      \end{cases} 
        \text{ and } \Tilde{y}_i^t = \begin{cases}
        \frac{y_i^{t-1}}{\sum\limits_{j \in B_y} y_j^{t-1}},\ &i \in B_y, \\ \quad 0, &i \not\in B_y. 
    \end{cases} 
    \]
\end{lemma}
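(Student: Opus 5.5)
The plan is a direct limit computation on the intermediate step, done for $\hat{x}^t$; the case of $\hat{y}^t$ is symmetric, with $R$ replaced by $-R^\top$ and the roles of the players exchanged. Fix $t>0$ and write $u_i = e_i^\top R y^{t-1}$ for the payoff of pure strategy $i$ against $y^{t-1}$. Let $u^\ast=\max_i u_i$, so that $B_x=\{i : u_i = u^\ast\}$. The first move is to multiply numerator and denominator of the intermediate-step formula by $e^{-\xi u^\ast}$, which gives
\[
\hat{x}_i^t = \frac{x_i^{t-1} e^{\xi (u_i - u^\ast)}}{\sum_j x_j^{t-1} e^{\xi(u_j-u^\ast)}}.
\]
All exponents $\xi(u_j - u^\ast)$ are now nonpositive. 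This normalization is the whole trick: it keeps every term bounded as $\xi$ grows.

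Next I take the limit $\xi\to\infty$ term by term. For $j\in B_x$ the factor $e^{\xi(u_j-u^\ast)}$ equals $1$ identically. For $j\notin B_x$ we have $u_j-u^\ast\le -\delta<0$, where $\delta$ is the gap between $u^\ast$ and the second-largest payoff value. Hence $e^{\xi(u_j-u^\ast)}\le e^{-\xi\delta}\to 0$, and the convergence is exponentially fast. The denominator therefore tends to $\sum_{j\in B_x}x_j^{t-1}$ and the numerator tends to $x_i^{t-1}\mathbf{1}[i\in B_x]$. This gives exactly the stated formula for $\Tilde{x}^t_i$, provided the limiting denominator is nonzero.

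The only real obstacle is that positivity condition, namely $\sum_{j\in B_x}x_j^{t-1}>0$. I would handle it with the standard assumption that the initial profile $(x^0,y^0)$ is fully mixed. Both the intermediate and update steps multiply each coordinate by a strictly positive factor and then normalize. So by induction on $t$, every iterate $x^{t-1},y^{t-1}$ lies in the interior of the simplex for any finite $\eta,\xi$. In particular the sum over the nonempty set $B_x$ is strictly positive. If one did not assume an interior start, the statement would have to be read with $B_x$ taken relative to $supp(x^{t-1})$, because coordinates outside the support stay at zero forever. I would note this caveat in a remark.

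Finally, since there are finitely many coordinates, coordinatewise convergence gives convergence in any norm, and the argument for $\hat{y}^t\to\Tilde{y}^t$ is identical using $-e_j^\top R^\top x^{t-1}$ and its maximizers. The limit is taken with $x^{t-1},y^{t-1}$ held fixed. That is, $\xi\to\infty$ at the given step, not along the whole trajectory, which is the sense in which the statement should be read.
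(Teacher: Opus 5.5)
Your proposal is correct. The paper does not reprove this lemma; it imports it from \citet{FMPV22}, so there is no in-paper proof to diverge from. Your argument is the standard direct one: rescale numerator and denominator by $e^{-\xi u^\ast}$, let the non-maximal terms vanish at rate $e^{-\xi\delta}$, and use interiority of the iterates to keep $\sum_{j\in B_x}x_j^{t-1}>0$. Interiority follows from \Cref{assumption:1} and the strictly positive multiplicative factors in both steps.

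The only related argument the paper itself gives is \Cref{cl:duality}. It uses LogSumExp/KL duality to show that $\hat{x}^t$ is a $D_{KL}(\Tilde{x}^t\|x^{t-1})/\xi$-best response. That bound does not depend on the payoff gap $\delta$, which is why the paper uses it to obtain rates. However, it only controls payoff suboptimality and cannot identify \emph{which} best response is selected. So it does not yield the proportional form $\Tilde{x}^t$; your coordinatewise limit is what the lemma as stated requires. Conversely, your exponential rate $e^{-\xi\delta}$ degrades as $\delta\to 0$, so it is not uniform near payoff ties.

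Your remark that the limit is taken with $(x^{t-1},y^{t-1})$ held fixed is therefore the right reading of the statement. Your caveat about non-interior starts, where $B_x$ must be taken relative to $supp(x^{t-1})$, is also accurate.
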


\begin{assumption}\label[assumption]{assumption:1}
We will start the dynamics from the fully uniform distribution, i.e., $x^0 = y^0 = (1/n,\dots, 1/n)$. Furthermore, we will use a fixed $\eta$, independent of $t$ in all iterations.\end{assumption}

\section{Convergence analysis}\label{sec:conv}

In this section, we use the duality gap as a metric to study the rate of convergence for FLBR-MWU. This answers the question left open by \citet{FMPV22}. Our analysis consists of two main parts. First, we obtain an analysis of convergence until an appropriate approximate equilibrium is reached, where the degree of approximation depends on $\eta$. Then, we show that if $\eta$ is sufficiently small, so as to guarantee that we are close to the exact solution, we can maintain a geometric rate to the exact equilibrium. The analysis comes at the cost of introducing a dependency on the game parameters. Finally, we note that all missing proofs in the sequel can be found in the Appendix.

\subsection{Convergence to an approximate equilibrium}
\label{subsec:approx}
Let $(x^*,y^*)$ be an arbitrary exact Nash equilibrium, and let $(x^t,y^t)$ be the strategy profile produced by the dynamics at the end of time step $t$. We stress that for the convergence to an approximate equilibrium, we do not need to assume uniqueness. 

In our analysis, we use the \textit{Kullback-Leibler (KL)} divergence of a profile from $(x^*,y^*)$, defined as 
\begin{gather*}
 D_{KL}((x^*,y^*)||(x^t,y^t)) =
\sum\nolimits_{i=1}^{n}x^*_i \cdot  \ln(x^*_i/x^t_i) + \sum\nolimits_{j=1}^{n}y^*_j \cdot \ln(y^*_j/y^t_j).
\end{gather*}
Note that by the definition of the dynamics, $x_i^t$ and $y_j^t$ are always positive for any $i, j$ and $t$; hence the ratios above are well-defined. For brevity, we write $D_{KL}((x^*,y^*)||(x^t,y^t))$ as $D^t$. The main technical property for the analysis of reaching an approximate equilibrium is the following lemma.  

\begin{lemma}\label[lemma]{lem:V-KL}
It holds that for any $t\geq 1$, and any $\eta\leq 1/2$
\[
\eta \cdot \left[ (\hat{x}^{t})^\top  R{y}^{t-1} - (x^{t-1})^\top  R\hat{y}^t \right] \leq  D^{t-1} - D^t  + 4\eta^2.
\]
\end{lemma}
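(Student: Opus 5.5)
The plan is the standard ``KL potential'' telescoping argument for multiplicative weights, applied to the update step only. The intermediate step will enter only through the vectors $\hat{x}^t,\hat{y}^t$ that the update step plays against. Concretely, I will compute $D^{t-1}-D^t$ exactly from the update rule, bound the log-normalizers, and then use that $(x^*,y^*)$ is a Nash equilibrium to drop a nonnegative term.

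\emph{Step 1 (exact expansion).} Write
\[
Z_x^t=\sum_j x_j^{t-1}e^{\eta e_j^\top R\hat{y}^t}, \qquad Z_y^t=\sum_i y_i^{t-1}e^{-\eta e_i^\top R^\top \hat{x}^t}.
\]
The update step gives $\ln(x_i^t/x_i^{t-1})=\eta e_i^\top R\hat{y}^t-\ln Z_x^t$, and similarly for $y$. Since $x^*,y^*$ are probability vectors,
\[
D^{t-1}-D^t=\sum_i x_i^*\ln\frac{x_i^t}{x_i^{t-1}}+\sum_j y_j^*\ln\frac{y_j^t}{y_j^{t-1}}
=\eta (x^*)^\top R\hat{y}^t-\ln Z_x^t-\eta (\hat{x}^t)^\top R y^*-\ln Z_y^t .
\]

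\emph{Step 2 (bounding the normalizers).} This is the only step requiring an estimate, and so the only place where $\eta$ and the range of the entries matter. Under $x^{t-1}$, the random variable $e_j^\top R\hat{y}^t$ takes values in $[0,1]$ because $R\in[0,1]^{n\times n}$. I would first apply Hoeffding's lemma to obtain
\[
\ln Z_x^t\le \eta (x^{t-1})^\top R\hat{y}^t+\eta^2/8 .
\]
Symmetrically, using $y^{t-1}$ and $\hat{x}^t$,
\[
\ln Z_y^t\le -\eta (\hat{x}^t)^\top R y^{t-1}+\eta^2/8 .
\]
If a more elementary route is preferred, I would instead use $e^{z}\le 1+z+z^2$ for $|z|\le 1$ together with $\ln(1+u)\le u$. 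This gives the bound with $\eta^2$ in place of $\eta^2/8$, and it is presumably where the hypothesis $\eta\le 1/2$ is used. Either way the total error is at most $4\eta^2$, so there is ample slack.

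\emph{Step 3 (using equilibrium).} Substituting the bounds from Step 2 into Step 1 yields
\[
D^{t-1}-D^t\ \ge\ \eta\big[(\hat{x}^t)^\top R y^{t-1}-(x^{t-1})^\top R\hat{y}^t\big]+\eta\big[(x^*)^\top R\hat{y}^t-(\hat{x}^t)^\top R y^*\big]-\tfrac{\eta^2}{4}.
\]
Because $(x^*,y^*)$ is an exact Nash equilibrium with value $v$, we have $(x^*)^\top R\hat{y}^t\ge v\ge(\hat{x}^t)^\top R y^*$. Hence the second bracket is nonnegative and can be dropped. Rearranging gives the claimed inequality, with $\eta^2/4\le 4\eta^2$.

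I expect no real obstacle. The only care needed is to keep the signs straight for the column player, whose update uses $-\eta$. I must also check that the cross terms pair correctly: $x^{t-1}$ should be paired with $\hat{y}^t$, and $\hat{x}^t$ with $y^{t-1}$, exactly as in the statement.
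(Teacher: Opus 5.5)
Your proposal is correct and follows the paper's proof essentially step for step. Both arguments expand $D^{t-1}-D^t$ exactly via the update rule, bound the two log-normalizers, and then drop $\eta\big[(x^*)^\top R\hat{y}^t-(\hat{x}^t)^\top R y^*\big]\ge 0$ using the equilibrium property. The only difference is in Step 2: the paper Taylor-expands the exponential, bounds the tail by $\eta^2/(1-\eta)\le 2\eta^2$ (this is where $\eta\le 1/2$ is used), and then applies $\ln x\le x-1$. Your Hoeffding route instead gives the sharper $\eta^2/8$ per player and needs no restriction on $\eta$.
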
 
\begin{proof}
We first rewrite the KL terms, by using the definition of the dynamics. 
\begin{align*}
&D_{KL}((x^*,y^*)||(x^{t-1},y^{t-1}))   -D_{KL}((x^*,y^*)||(x^{t},y^{t})) \\
&= \sum\nolimits_{i=1}^{n} x^*_i \cdot \ln(x^{t}_i/x^{t-1}_i) + \sum\nolimits_{j=1}^{n} y^*_j \cdot \ln(y^{t}_j/y^{t-1}_j) \\
&= \sum\limits_{i=1}^{n}x^*_i \cdot \ln e^{\eta \cdot e_i^\top  R \hat{y}^t} - \ln\Big(\sum\limits_{k=1}^{n} x^{t-1}_k \cdot e^{\eta \cdot e_k^\top  R \hat{y}^t}\Big) \\
&\hphantom{=} + \sum\limits_{j=1}^{n}y^*_j \cdot \ln e^{-\eta \cdot e_j^\top  R^\top  \hat{x}^t} -\ln\Big(\sum\limits_{k=1}^{n} y^{t-1}_k \cdot e^{-\eta \cdot e_k^\top  R^\top  \hat{x}^t}\Big) \\
& = \eta \cdot (x^*)^{T}R\hat{y}^t - \eta \cdot (y^{*})^T R^\top  \hat{x}^t - \ln\Big(\sum\limits_{k=1}^{n} x^{t-1}_k\cdot e^{\eta \cdot e_k^\top  R \hat{y}^t}\Big) - \ln\Big(\sum\limits_{k=1}^{n} y^{t-1}_k \cdot e^{-\eta \cdot  e_k^\top  R^\top  \hat{x}^t}\Big).
\end{align*}

We now use the Taylor expansion of the exponential function in the arguments of the last two logarithms. 
For the first logarithmic term, this becomes:
\begin{align*}
\ln\Big(\sum\limits_{k=1}^{n} x^{t-1}_k\cdot e^{\eta \cdot e_k^\top  R \hat{y}^t}\Big) &= \ln\Big(1 + \eta \cdot (x^{t-1})^\top  R\hat{y}^t + \sum\limits_{k=1}^{n} x^{t-1}_k \sum_{\ell\geq 2} \frac{(\eta \cdot e_k^\top  R \hat{y}^t)^\ell}{\ell!} \Big)\\
&\leq \ln\Big(1 + \eta \cdot (x^{t-1})^\top  R\hat{y}^t + 2\eta^2 \Big). \end{align*}

For the above we used the fact that $\sum_{\ell\geq 2} \frac{(\eta \cdot e_k^\top  R \hat{y}^t)^\ell}{\ell!} \leq \frac{\eta^2}{1-\eta} \leq 2\eta^2$, since $\eta\leq 1/2$. By exploiting now the inequality $\ln{(x)}\leq x-1$, we finally obtain the bound
\[ \ln\Big(\sum\limits_{k=1}^{n} x^{t-1}_k\cdot e^{\eta \cdot e_k^\top  R \hat{y}^t}\Big) \leq  \eta \cdot (x^{t-1})^\top  R\hat{y}^t +2\eta^2.\]
By carrying out similar calculations for the second logarithmic term, we will also get that  
\[\ln\Big(\sum\limits_{k=1}^{n} y^{t-1}_k \cdot e^{-\eta \cdot  e_k^\top  R^\top  \hat{x}^t}\Big) \leq  -\eta \cdot (\hat{x}^{t})^\top  R{y}^{t-1} +2\eta^2.\]
\noindent This gives us:
\begin{gather*}
D_{KL}((x^*,y^*)||(x^{t-1},y^{t-1})) -D_{KL}((x^*,y^*)||(x^{t},y^{t})) \\
\geq \eta \cdot (x^*)^{T}R\hat{y}^t - \eta \cdot (y^{*})^T R^\top  \hat{x}^t - \eta \cdot (x^{t-1})^\top  R\hat{y}^t+\eta \cdot (\hat{x}^{t})^\top  R{y}^{t-1} -4\eta^2.
\end{gather*}
By rearranging the terms, we obtain that
\begin{align*}
\eta \cdot \left( (\hat{x}^{t})^\top  R{y}^{t-1} - (x^{t-1})^\top  R\hat{y}^t \right) &\leq  D_{KL} ((x^*,y^*)||(x^{t-1},y^{t-1}))- D_{KL} ((x^*,y^*)||(x^{t},y^{t})) + 4\eta^2 \\ &\hphantom{\le} - \eta \cdot (x^*)^\top  R\hat{y}^{t} + \eta \cdot (\hat{x}^{t})^\top  Ry^*.
\end{align*}

Note now that since $(x^*, y^*)$ is a Nash equilibrium, and we are in a zero-sum game, then we know that $(x^*)^\top  R\hat{y}^{t} \geq v$, where $v$ is the value of the game. Similarly, $(\hat{x}^{t})^\top  Ry^* \leq v$. 
Hence these terms cancel out in the above equation and the proof is complete.
\end{proof}

The previous lemma is crucial as it provides a way to correlate the duality gap with the KL divergence. In particular, the left hand side of the formula is a proxy quantity for the duality gap, and converges to it should we choose a large enough $\xi$, as established via the following claim.

\begin{claim}\label{cl:duality}
    For any $t>0$, it holds that $\hat{x}^t$ is a $\frac{D_{KL}(\Tilde{x}^t || x^{t-1})}{\xi}$-best response against $y^{t-1}$, where $\Tilde{x}^t$ is the best response defined in \Cref{fact:best-response}. Similarly for the column player, $\hat{y}^t$ is a $\frac{D_{KL}(\Tilde{y}^t || y^{t-1})}{\xi}$-best response against $x^{t-1}$.
\end{claim}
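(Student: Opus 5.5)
Throughout, write $u_i = e_i^\top R y^{t-1}$, $u^{\max} = \max_i u_i$, and $B_x = \arg\max_i u_i$, the set of pure best responses against $y^{t-1}$. The plan is to use the Gibbs variational principle. The intermediate step defines $\hat{x}^t$ as the exponential tilt of $x^{t-1}$, so $\hat{x}^t$ is exactly the maximizer over $\Delta_n$ of
\[
F(p) = \xi\, p^\top R y^{t-1} - D_{KL}(p\,\|\,x^{t-1}),
\]
and its maximum value is the log-partition function $\ln Z$, where $Z=\sum_j x^{t-1}_j e^{\xi u_j}$. We check this by expanding $D_{KL}(p\|\hat{x}^t)\ge 0$. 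Since $\ln \hat{x}^t_i = \ln x^{t-1}_i + \xi u_i - \ln Z$, we get $D_{KL}(p\|\hat{x}^t) = D_{KL}(p\|x^{t-1}) - \xi\, p^\top R y^{t-1} + \ln Z$. Hence $F(p) \le \ln Z = F(\hat{x}^t)$.

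\textbf{Step 1: compare with the proportional best response.} Take $p = \Tilde{x}^t$ from \Cref{fact:best-response}. It is supported on $B_x$, so $\Tilde{x}^{t\top} R y^{t-1} = u^{\max}$. Optimality of $\hat{x}^t$ then gives
\[
\xi\, (\hat{x}^t)^\top R y^{t-1} - D_{KL}(\hat{x}^t\|x^{t-1}) \;\ge\; \xi\, u^{\max} - D_{KL}(\Tilde{x}^t\|x^{t-1}).
\]

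\textbf{Step 2: drop the nonnegative term.} Since $D_{KL}(\hat{x}^t\|x^{t-1})\ge 0$, we obtain
\[
(\hat{x}^t)^\top R y^{t-1} + \frac{D_{KL}(\Tilde{x}^t\|x^{t-1})}{\xi} \;\ge\; u^{\max} = \max_j e_j^\top R y^{t-1}.
\]
This is precisely the $\rho$-best-response condition with $\rho = D_{KL}(\Tilde{x}^t\|x^{t-1})/\xi$. This KL term is finite because $x^{t-1}$ has full support. It also has the explicit value $-\ln \sum_{j\in B_x} x^{t-1}_j$, which confirms the bound is meaningful.

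\textbf{Step 3: the column player.} The same argument applies to $\hat{y}^t$, now as a minimizer of $\xi\, x^{t-1\top} R q + D_{KL}(q\|y^{t-1})$, compared against $\Tilde{y}^t$ supported on $\arg\min_j x^{t-1\top} R e_j$.

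I do not expect a real obstacle here. The only care needed is the variational identity, which requires the strict positivity of $x^{t-1}$ noted earlier in the paper, and correct bookkeeping of signs for the minimizing player.
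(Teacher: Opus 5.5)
Your proof is correct and follows essentially the same route as the paper, which also invokes the Fenchel duality between the LogSumExp (log-partition) function and the KL divergence, treating the softmax step $\hat{x}^t$ as the maximizer and comparing it against a best response $x^B$ that is then taken to be $\Tilde{x}^t$. Your argument amounts to the chain $(\hat{x}^t)^\top R y^{t-1} \ge \frac{1}{\xi}\ln Z \ge \max_i e_i^\top R y^{t-1} - \frac{1}{\xi}D_{KL}(\Tilde{x}^t\|x^{t-1})$, with your $Z$. This is the rigorous form of the paper's step: the paper's displayed equality $f(Ry^{t-1}) = (\hat{x}^t)^\top R y^{t-1} + \frac{1}{\xi}D_{KL}(x^B\|x^{t-1})$ is not a literal identity, and the correct one is $\frac{1}{\xi}\ln Z = (\hat{x}^t)^\top R y^{t-1} - \frac{1}{\xi}D_{KL}(\hat{x}^t\|x^{t-1})$, exactly as you derived.
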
 
\begin{proof}
The proof uses the properties of the LogSumExp function, which we denote with $f$. In particular, given a mixed strategy $y$ of the column player, let $Ry$ be the payoff vector of the row player. Then $f$ is defined as $f(Ry) = log(\sum_{i=1}^n e^{(Ry)_i})$.
First of all, notice that the intermediate step of FLBR dynamics is essentially the softmax function, which is the gradient of $f$. Next, we use the fact that $f$ is the Fenchel conjugate of the KL divergence. Combining the two gives that if $x^{B}$ is a best response strategy against $y^{t-1}$, we have
\[
f(R y^{t-1}) = (\hat{x}^{t})^\top R y^{t-1} +\frac{1}{\xi} D_{KL}(x^B || x^{t-1}).
\]
Next, note that $f(R y^{t-1}) \ge \max_i e_i^\top R y^{t-1} = (x^B)^\top R y^{t-1}$. Substituting $f$ from above and rearranging the terms yields
\begin{equation}
\label{eq:x-approx-BR}
(x^B)^\top R y^{t-1}  - (\hat{x}^{t})^\top R y^{t-1}  \le \frac{1}{\xi} D_{KL}(x^B || x^{t-1}).
\end{equation}

To complete the proof, note that the argument applies to any best response against $y^{t-1}$, and hence we can use  $\Tilde{x}^{t}$ for $x^B$.

In a similar fashion, we can also establish that
\begin{equation}
\label{eq:y-approx-BR}
(x^{t-1})^\top R \hat{y}^{t} \leq  (x^{t-1})^\top R \Tilde{y}^{t}  + \frac{1}{\xi} D_{KL}(\Tilde{y}^t || y^{t-1}).
\end{equation}
\end{proof}

From the Claim we have the following:
\begin{corollary}\label[corollary]{cor:1}
Given a time step $T_0$, and any $t\le T_0$, $\xi  \ge \frac{T_0}{2\ln n} + \frac{1}{2\eta}$, and $\eta \le 1/2$, it holds that
\[V(x^{t-1}, y^{t-1}) \leq \frac{D^{t-1} - D^t }{\eta}+ 4(\ln n + 1)\eta.
\]
\end{corollary}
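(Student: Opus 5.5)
The plan is to split the duality gap $V(x^{t-1},y^{t-1})$ into the proxy quantity controlled by \Cref{lem:V-KL} plus two best-response errors controlled by \Cref{cl:duality}. Then I bound those errors uniformly for $t\le T_0$ using the choice of $\xi$.

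\textbf{Step 1 (decomposition).} Write
\[
V(x^{t-1},y^{t-1}) = (\Tilde{x}^t)^\top R y^{t-1} - (x^{t-1})^\top R\Tilde{y}^t .
\]
This holds because $\Tilde{x}^t$ and $\Tilde{y}^t$ are best responses against $y^{t-1}$ and $x^{t-1}$, by \Cref{fact:best-response}. Adding and subtracting $(\hat{x}^t)^\top R y^{t-1}$ and $(x^{t-1})^\top R\hat{y}^t$, and applying \eqref{eq:x-approx-BR} and \eqref{eq:y-approx-BR}, gives
\[
V(x^{t-1},y^{t-1}) \le \big[(\hat{x}^{t})^\top R{y}^{t-1} - (x^{t-1})^\top R\hat{y}^t\big] + \frac{D_{KL}(\Tilde{x}^t\|x^{t-1}) + D_{KL}(\Tilde{y}^t\|y^{t-1})}{\xi}.
\]
By \Cref{lem:V-KL}, which needs $\eta\le 1/2$, the bracket is at most $(D^{t-1}-D^t)/\eta + 4\eta$.

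\textbf{Step 2 (bounding the KL terms).} By the explicit form of $\Tilde{x}^t$ in \Cref{fact:best-response}, $D_{KL}(\Tilde{x}^t\|x^{t-1}) = -\ln\sum_{j\in B_x}x_j^{t-1} \le -\ln\min_j x^{t-1}_j$. To lower-bound the coordinates, note that in each update step the entries of $R$ lie in $[0,1]$. Hence the numerator factor $e^{\eta e_i^\top R\hat y}$ is at least $1$ and the normalizer is at most $e^{\eta}$, so each coordinate shrinks by a factor of at least $e^{-\eta}$ per step. Starting from the uniform profile (\Cref{assumption:1}), this gives $x_j^{t-1}\ge e^{-\eta(t-1)}/n$. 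Therefore $D_{KL}(\Tilde{x}^t\|x^{t-1})\le \ln n + \eta T_0$ for $t\le T_0$, and the same bound holds for the $y$ term.

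\textbf{Step 3 (choice of $\xi$).} The error term is at most $2(\ln n+\eta T_0)/\xi$. The hypothesis can be rewritten as
\[
\xi\ge \frac{T_0}{2\ln n}+\frac{1}{2\eta} = \frac{\eta T_0+\ln n}{2\eta\ln n},
\]
so this term is at most $4\eta\ln n$. Combining it with the $4\eta$ from Step 1 gives exactly $4(\ln n+1)\eta$, which completes the bound.

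The main obstacle is Step 2: getting a time-uniform lower bound on the coordinates of $x^{t-1}$. This relies on the payoffs lying in $[0,1]$ and on the uniform start. Everything else is bookkeeping with the two earlier results, and the algebra in Step 3 is tuned precisely to the given threshold on $\xi$.
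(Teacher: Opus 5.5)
Your proposal is correct and follows essentially the same route as the paper. You write $V(x^{t-1},y^{t-1})$ via the proportional best responses $\Tilde{x}^t,\Tilde{y}^t$, combine \eqref{eq:x-approx-BR}--\eqref{eq:y-approx-BR} with \Cref{lem:V-KL}, bound each KL term by $-\ln\min_j x_j^{t-1}\le \ln n+\eta T_0$ from the uniform start, and check that the threshold on $\xi$ gives $2(\ln n+\eta T_0)/\xi\le 4\eta\ln n$. The only differences are cosmetic: you spell out the per-step $e^{-\eta}$ shrinkage that the paper leaves as an induction (for the column player this factor comes from the numerator rather than the normalizer), and you use $\eta(t-1)\le\eta T_0$ where the paper uses $\eta t$.
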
 
\begin{proof}
We want to invoke \Cref{cl:duality}. However, to obtain concrete rates, we must bound the KL term. To that extent, the first step is to bound $D_{KL}(\Tilde{x}^t || x^{t-1})$ and $D_{KL}(\Tilde{y}^t || y^{t-1})$.  It is trivial to observe that  
$$D_{KL}(\Tilde{x}^t || x^{t-1}) \leq \ln\Big(\frac{1}{\min_{i:\Tilde{x}^t_i>0} x_i^{t-1}}\Big) \leq \ln\Big(\frac{1}{\min_{i\in [n]} x_i^{t-1}}\Big).$$ 

Next, since the payoffs are scaled in $[0,1]$ we can prove by induction that $x_i^t \ge x^0_i e^{-\eta t} = \frac{1}{n} e^{-\eta t}$ due to \Cref{assumption:1} (the same lower bound is also true for the column player, for $y_i^t$). 
Thus, our crude bound for the divergence we are interested in is that $D_{KL}(x^* || x^{t-1}) \le \ln n + \eta t$ and the same holds also for $D_{KL}(\Tilde{y}^t || y^{t-1})$. 
Now, by using \Cref{eq:x-approx-BR} and \Cref{eq:y-approx-BR} from \Cref{cl:duality}, we obtain

\[(\hat{x}^{t})^\top  R{y}^{t-1} - (x^{t-1})^\top  R\hat{y}^t \ge  (\Tilde{x}^t)^\top R y^{t-1} - (x^{t-1})^\top R \Tilde{y}^{t} - \frac{1}{\xi} (D_{KL}(\Tilde{x}^t || x^{t-1}) + D_{KL}(\Tilde{y}^t || y^{t-1})).
\]

Notice that by definition, $V(x^{t-1}, y^{t-1}) = (\Tilde{x}^t)^\top R y^{t-1} - (x^{t-1})^\top R \Tilde{y}^{t}$. Therefore, multiplying both sides with $\eta$ and using \Cref{lem:V-KL} gives us
\begin{align*}
\eta\left[V(x^{t-1}, y^{t-1}) - \frac{2(\ln n + \eta t)}{\xi}\right] &\le D^{t-1} - D^t  + 4\eta^2 \implies \\ V(x^{t-1}, y^{t-1}) &\le \frac{D^{t-1} - D^t}{\eta} + 4\eta + \frac{2(\ln n + \eta t)}{\xi}.
\end{align*}
For $\xi  \ge \frac{T_0}{2\ln n} + \frac{1}{2\eta}$, we bound the last term by $4 \eta \ln n$ and the proof is completed. 
\end{proof}

\begin{theorem}\label{th:flbrconvergence}
Suppose \Cref{assumption:1} holds, and fix any constant $\eta \le 1/2$. 
There exists a large enough $\xi$ so that
the rate of convergence for the KL divergence in order to reach an $O(\eta)$-NE is of the form $O(c^t)$ where $c<1$ is independent of t and dependent only on $\eta$ and the size of the game n. Similarly the convergence rate of the duality gap to reach and $O(\eta)$-NE is of the form $O(\frac{1}{\eta}\cdot c^t)$.
\end{theorem}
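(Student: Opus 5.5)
The plan is to turn \Cref{cor:1} into a one-step contraction of $D^t$ for as long as the current profile is not yet an $O(\eta)$-NE. Fix a horizon $T_0$ and choose $\xi \ge \frac{T_0}{2\ln n} + \frac{1}{2\eta}$, so that \Cref{cor:1} holds for every $t \le T_0$. Rearranging the corollary gives
\[
D^t \le D^{t-1} - \eta V(x^{t-1},y^{t-1}) + 4(\ln n+1)\eta^2 .
\]
Suppose that at step $t-1$ we have not yet reached an $8(\ln n+1)\eta$-NE, i.e. $V(x^{t-1},y^{t-1}) > 8(\ln n+1)\eta$. Then the error term is at most half of $\eta V$, and so $D^t \le D^{t-1} - \frac{\eta}{2} V(x^{t-1},y^{t-1})$. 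In particular $D^t$ is monotonically decreasing in this phase.

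The key step is to lower bound $V(x^{t-1},y^{t-1})$ by a multiple of $D^{t-1}$, which turns the additive decrease into a multiplicative one. By \Cref{assumption:1}, $D^0 \le 2\ln n$, and since $D^t$ decreases we get $D^{t-1} \le 2\ln n$ throughout the phase. On the other hand, the duality gap in this phase exceeds $8(\ln n+1)\eta$. Combining the two,
\[
V(x^{t-1},y^{t-1}) > \frac{8(\ln n+1)\eta}{2\ln n}\, D^{t-1} \ge 4\eta D^{t-1}.
\]
Substituting into the previous inequality gives $D^t \le (1-2\eta^2) D^{t-1}$. Iterating, $D^t \le 2\ln n\,(1-2\eta^2)^t$ for every $t$ before the first time an $O(\eta)$-NE is reached, so $c = 1-2\eta^2 < 1$ depends only on $\eta$. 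If one prefers to avoid this crude ratio argument, the same conclusion follows with a slightly different constant depending on $n$.

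For the duality gap, the first inequality and the decrease of $D$ give $V(x^{t-1},y^{t-1}) \le \frac{2}{\eta}(D^{t-1}-D^t) \le \frac{2}{\eta} D^{t-1} = O\!\left(\frac{1}{\eta} c^{t}\right)$ throughout the phase. This is the claimed rate, valid until $V$ drops to $O(\eta)$.

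The main obstacle is the coupling between $\xi$ and the horizon, since \Cref{cor:1} needs $\xi$ to grow with $T_0$. I would resolve it by noting that the phase has bounded length. Because $D^t \ge 0$ decreases by at least $4(\ln n+1)\eta^2$ at each step of the phase, the phase lasts at most $T_0 = \lceil \ln n/(2\eta^2)\rceil$ steps. That $T_0$ depends only on $n$ and $\eta$, so we may fix $\xi \ge \frac{T_0}{2\ln n}+\frac{1}{2\eta}$ once in advance; this is the ``large enough $\xi$'' in the statement. A secondary technical point is justifying the lower bound $V \gtrsim D$, for which the ratio bound using $D \le 2\ln n$ is the simplest route.
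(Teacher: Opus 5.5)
Your proposal is correct and follows the paper's argument: you plug the not-yet-at-equilibrium hypothesis into \Cref{cor:1} to get an additive decrease of $D^t$, convert it into a geometric contraction via $D^{t-1}\le D^0\le 2\ln n$, unroll, and fix a horizon $T_0$ depending only on $\eta$ and $n$ so that $\xi$ can be chosen in advance. The differences are only in constants and bookkeeping: your threshold $8(\ln n+1)\eta$ gives $c=1-2\eta^2$ instead of the paper's $1-\eta^2/\ln n$, you bound the gap directly by $V\le \frac{2}{\eta}D^{t-1}$, and you bound the phase length by additive counting, which avoids the extra logarithmic factor in the paper's $T_0$.
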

\begin{proof}
We will first assume \Cref{cor:1} for some value of $\xi$ as a function of some $T_0$, as dictated by \Cref{cor:1}, and once we have established the theorem,  we will drop $T_0$ from the analysis and find an appropriate lower bound for $\xi$. Assume that we have not reached a $(4\ln n + 6)\eta$ equilibrium. In other words, it holds that $V(x^{t-1}, y^{t-1}) \ge (4\ln n + 6)\eta$. Plugging this into \Cref{cor:1} gives us, after rearranging the terms:\[
    D^t \leq  D^{t-1} - 2\eta^2 = D^{t-1} \Big(1 - \frac{2\eta^2}{D^{t-1} }\Big).
\]

By the above, the KL divergence only decreases, and hence we have that $D^{t-1}\leq D^0 \leq 2\ln(n)$ (due to \Cref{assumption:1}). Thus, we deduce
\[ D^t  \leq   D^{t-1} \Big(1 - \frac{\eta^2}{\ln(n)}\Big).\]
Since $\eta \le 1/2 < \sqrt{\ln(n)}$, we can unroll the above inequality for all time steps up to $t$ to obtain
\[  D^t \leq D^0 \Big(1 - \frac{\eta^2}{\ln(n)}\Big)^t \le 2\ln(n)\Big(1 - \frac{\eta^2}{\ln(n)}\Big)^t.\]
This means that the KL divergence at time $t$ is bounded by $2\ln(n) \cdot c^t$, where $c<1$ is independent of $t$ and dependent on $\eta$ and $n$.
Coming now to the duality gap, we conclude by \Cref{cor:1} that 
\begin{equation}\label{eq:duality-bound}
     V(x^t, y^t) \leq  \frac{D^t}{\eta}+ 4(\ln n + 1)\eta  \leq  \frac{2\ln(n)}{\eta}\Big(1 - \frac{\eta^2}{\ln(n)}\Big)^t + 4(\ln n + 1)\eta.
\end{equation}
This upper bound combined with $V(x^t, y^t) \ge (4\ln n + 6)\eta$ implies that for any time step $t$, until we reach an approximate equilibrium, we have that $\eta \leq \frac{\ln(n)}{\eta}\Big(1 - \frac{\eta^2}{\ln(n)}\Big)^t $. By plugging this back into \Cref{eq:duality-bound}, we eventually get:
\[
    V(x^t, y^t) \leq \ln n \frac{4\ln n + 6}{\eta}\Big(1 - \frac{\eta^2}{\ln(n)}\Big)^t.
\]
To complete the proof, we must identify $\xi$, or in other words, drop $T_0$ from the analysis. To that end, it suffices to find the number of iterations actually needed to reach the claimed equilibrium, and set this as the value of $\xi$. We calculate
\begin{align*}
    \ln n\frac{4\ln n + 6}{\eta}\Big(1 - \frac{\eta^2}{\ln(n)}\Big)^{T_0} &\le (4\ln n + 6)\eta \implies 
        T_0 \ge \Bigg\lceil \frac{2\ln \eta - \ln(\ln n)}{\ln \Big(1 - \frac{\eta^2}{\ln(n)}\Big)} \Bigg\rceil.
\end{align*}
Since $T_0$ depends only on $\eta$, which is a fixed constant, and $n$ we can obtain a sufficiently large $\xi$, in accordance with \Cref{cor:1}.
\end{proof}
By the proof of the previous Theorem we can already derive a last-iterate convergence rate.
\begin{corollary}\label[corollary]{cor:new}
For $\epsilon > 0$, and $\eta \leq  \frac{\epsilon}{4\ln n + 6}$, the FLBR-MWU dynamics reach an $\epsilon$-approximate Nash equilibrium after $O\left(\frac{1}{\epsilon^2} \log\frac{1}{\epsilon}\right)$ iterations, where the hidden constant depends only on the dimension $n$ and not on the payoff matrix.
\end{corollary}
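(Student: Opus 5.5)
The plan is to read the corollary off the proof of \Cref{th:flbrconvergence}, choosing $\eta$ as large as the statement allows. Fix $\epsilon>0$ and set $\eta = \frac{\epsilon}{4\ln n + 6}$; this satisfies $\eta \le 1/2$ whenever $\epsilon \le 1$, which we may assume. Let $T_0$ be the first time step at which the profile is a $(4\ln n+6)\eta$-equilibrium, i.e., an $\epsilon$-NE by \Cref{Nash from V}. For every $t < T_0$ we have $V(x^{t},y^{t}) \ge (4\ln n + 6)\eta$. The argument in the proof of \Cref{th:flbrconvergence} (via \Cref{cor:1}, with $\xi$ chosen large enough as a function of the resulting $T_0$) then gives
\[
V(x^t,y^t) \le \ln n\,\frac{4\ln n+6}{\eta}\Big(1-\frac{\eta^2}{\ln n}\Big)^t
\]
for all such $t$.

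The key step is the iteration count. The right-hand side drops below $(4\ln n+6)\eta$ as soon as $\big(1-\frac{\eta^2}{\ln n}\big)^t \le \frac{\eta^2}{\ln n}$. Using $\ln(1-z)\le -z$, this holds once
\[
t \ge \frac{\ln n}{\eta^2}\,\ln\frac{\ln n}{\eta^2}.
\]
Substituting $\eta = \epsilon/(4\ln n+6)$, the right-hand side equals
\[
\frac{(4\ln n+6)^2\ln n}{\epsilon^2}\Big(2\ln\frac{1}{\epsilon} + \ln\big((4\ln n+6)^2\ln n\big)\Big) = O\Big(\frac{1}{\epsilon^2}\log\frac{1}{\epsilon}\Big).
\]
The constants here depend only on $n$, and not on $R$ beyond the normalization $R\in[0,1]^{n\times n}$, which was already used in \Cref{lem:V-KL} and \Cref{cor:1}. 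Hence $T_0 = O\big(\frac{1}{\epsilon^2}\log\frac1\epsilon\big)$, which is the claim.

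The main obstacle is the circularity between $\xi$ and $T_0$: \Cref{cor:1} requires $\xi \ge \frac{T_0}{2\ln n}+\frac{1}{2\eta}$. I would resolve it as the theorem's proof does. First fix the explicit bound $T^\star$ on the number of iterations computed above, then set $\xi = \frac{T^\star}{2\ln n}+\frac{1}{2\eta}$, and finally apply \Cref{cor:1} with $T_0=T^\star$. All inequalities used are then valid for every $t\le T^\star$, so an $\epsilon$-NE is reached within $T^\star$ steps. This choice makes $\xi$ depend on $\epsilon$ and $n$ only, consistent with the statement. A minor point to check is that the case where the equilibrium is reached before $T^\star$ needs no extra argument, since we only claim the equilibrium is reached by time $T^\star$.
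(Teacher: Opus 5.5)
Your proposal is correct and follows essentially the same route as the paper. Both fix $\eta = \epsilon/(4\ln n+6)$, bound the time $T_0$ from the proof of \Cref{th:flbrconvergence} via $\ln(1-z)\le -z$ to get $O\big(\frac{1}{\epsilon^2}\log\frac{1}{\epsilon}\big)$, and choose $\xi$ from that bound as \Cref{cor:1} dictates. Your estimate $t \ge \frac{\ln n}{\eta^2}\ln\frac{\ln n}{\eta^2}$ is in fact a bit more careful than the paper's chain, which drops the $-\ln(\ln n)$ term (an inequality that points the wrong way once $n\ge 3$) and writes $\ln\eta$ where $\ln(1/\eta)$ is meant.
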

\begin{proof}
    From the previous analysis, to reach an $\epsilon$-NE it suffices to set $\eta =  \frac{\epsilon}{4\ln n + 6}$. Then, we can upper bound $T_0$ as follows:
    \begin{align*}
        T_0 &= \Big\lceil \frac{2\ln \eta - \ln(\ln n)}{\ln \Big(1 - \frac{\eta^2}{\ln(n)}\Big)} \Big\rceil 
        \le \frac{2\ln \eta }{\ln \Big(1 - \frac{\eta^2}{\ln(n)}\Big)}\\
        &\le \frac{2\ln\eta \ln n}{\eta^2}.
    \end{align*}
    Then, by substituting $\eta = \frac{\epsilon}{4\ln n + 6}$ and for $\xi = \Omega\Big(\frac{\ln \epsilon}{\epsilon^2}\Big)$ we conclude.
\end{proof}

\begin{remark}
\label{rem:convergence_clarification}
It may appear at first sight that \Cref{th:flbrconvergence} and \Cref{cor:new} are conflicting, since the former one refers to a geometric rate, whereas the latter yields $poly(1/\epsilon)$ iterations. The subtlety is that \Cref{th:flbrconvergence} states a geometric rate for a constant degree of approximation (to reach an $O(\eta)$-equilibrium), which translates to a constant time horizon till this is achieved, while \Cref{cor:new} establishes a rate for any $\epsilon>0$. We selected to state our results in this way, since in the following section we will prove a second convergence rate that holds after some initial phase. Thus, for a suitable parameter selection we can piece together the results to have an overall geometric rate, albeit with game-dependent constants.
\end{remark}

\subsection{Convergence to an exact equilibrium under uniqueness}
We proceed here to analyze the convergence until the method reaches an exact equilibrium. The technique here is based on a spectral analysis, and for this, we will need to further assume that the game has a unique Nash equilibrium $(x^*, y^*)$. This is a rather common assumption in many related works, and we do not view this as a severe restriction, since the set of zero-sum games with non-unique NE has Lebesgue measure equal to zero \citep{van1991stability}.

Let $t_0$ be the time at which we reach the approximate equilibrium described in \Cref{subsec:approx} and let $(x^{t_0}, y^{t_0})$ be the corresponding strategy profile. 
The first step in the remaining analysis is to establish that this approximate equilibrium can be close to the actual Nash equilibrium. This is ensured if $\eta$ is sufficiently small.

\begin{corollary}[implied by Theorem 3 in \citet{FMPV22}]
\label[corollary]{cor:delta-close}
For any $\delta>0$, and for any $q\geq 1$, there exists a sufficiently small $\eta$, such that $||(x^*, y^*) - (x^{t_0}, y^{t_0})||_q \leq \delta. $
\end{corollary}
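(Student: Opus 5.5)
The plan is to combine the approximation guarantee of the previous subsection with uniqueness of the equilibrium through a compactness argument. This avoids any explicit spectral information at this stage.

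First, I fix $\delta>0$ and $q\ge 1$. Since all norms on $\mathbb{R}^{2n}$ are equivalent, it suffices to prove the claim for the $L_1$ norm and then adjust $\delta$ by a dimension-dependent constant.

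By \Cref{th:flbrconvergence} (with $\xi$ chosen large enough as in \Cref{cor:1}), the profile $(x^{t_0},y^{t_0})$ satisfies $V(x^{t_0},y^{t_0})\le (4\ln n+6)\eta$. This bound tends to $0$ as $\eta\to 0$. Its constant depends only on $n$, and the analysis gives it for every $\eta\le 1/2$. So it remains to show that profiles with small duality gap are close to $(x^*,y^*)$.

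For this I would argue by contradiction. Suppose there is a sequence $\eta_k\to 0$ with corresponding profiles $(x^{(k)},y^{(k)})$ satisfying $V(x^{(k)},y^{(k)})\to 0$ but $\|(x^{(k)},y^{(k)})-(x^*,y^*)\|_1>\delta$. The product of simplices $\Delta_n\times\Delta_n$ is compact, so a subsequence converges to some $(\bar x,\bar y)$. The function $V$ is continuous, being a max minus a min of finitely many linear functions. Hence $V(\bar x,\bar y)=0$, and by \Cref{Nash from V} the limit $(\bar x,\bar y)$ is a Nash equilibrium. On the other hand, $\|(\bar x,\bar y)-(x^*,y^*)\|_1\ge\delta$, so it is a Nash equilibrium different from $(x^*,y^*)$. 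This contradicts uniqueness.

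Equivalently, the function $g(r)=\min\{V(x,y): \|(x,y)-(x^*,y^*)\|_1\ge r\}$ is positive for every $r>0$. Choosing $\eta$ with $(4\ln n+6)\eta<g(\delta)$ then forces $\|(x^{t_0},y^{t_0})-(x^*,y^*)\|_1<\delta$.

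The main subtlety is that $\xi$ must grow as $\eta$ shrinks, per \Cref{cor:1}. I therefore need to make sure the guarantee on $V(x^{t_0},y^{t_0})$ holds uniformly for each small $\eta$ paired with its own sufficiently large $\xi$. This is exactly what \Cref{cor:new} provides. The remaining issue is that $g$ is game-dependent, so this step does not yield an explicit relation between $\eta$ and $\delta$.

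If an explicit relation were wanted, I would use a linear-programming sharpness bound at the unique equilibrium: $V(x,y)\ge \kappa\|(x,y)-(x^*,y^*)\|_1$ for some game-dependent constant $\kappa>0$. This holds because $V$ is a polyhedral convex function whose minimizer is unique. It would give $\delta=(4\ln n+6)\eta/\kappa$, but it is not needed for the existence statement.
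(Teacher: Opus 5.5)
Your proof is correct. It differs from the paper in that the paper gives no argument of its own for this statement: it imports it from Theorem 3 of \citet{FMPV22}. You instead prove it from this paper's own first phase. You take the bound $V(x^{t_0},y^{t_0})<(4\ln n+6)\eta$ from the proof of \Cref{th:flbrconvergence}, whose constant depends only on $n$. You combine it with the observation that, under uniqueness, $(x^*,y^*)$ is the only zero of the continuous function $V$ on the compact set $\Delta_n\times\Delta_n$, so the modulus $g(\delta)$ is positive.

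Your route buys three things. First, the argument is self-contained and matches how $t_0$ is actually defined here, namely via \Cref{subsec:approx} rather than via the earlier paper's guarantee. Second, it shows exactly where uniqueness enters. Third, the weak-sharp-minimum bound $V\ge\kappa\|(x,y)-(x^*,y^*)\|_1$ for the polyhedral function $V$ on the polytope gives the explicit choice $\eta\le\min\{1/2,\kappa\delta/(4\ln n+6)\}$. The citation is shorter but leaves the $\eta$--$\delta$ dependence implicit.

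Two small simplifications are available:
\begin{itemize}
\item Since $\|z\|_q\le\|z\|_1$ for every $q\ge 1$, the $L_1$ bound already gives the $L_q$ bound, with no norm-equivalence constant.
\item There is no uniformity issue to resolve, so \Cref{cor:new} is not needed. For the single $\eta$ you finally pick, you only use the guarantee of \Cref{th:flbrconvergence} with that $\eta$'s own $\xi(\eta)$.
\end{itemize}
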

Using the above, the asymptotic last-iterate convergence of FLBR (but without a rate) was established in \citet{FMPV22} by proving that the maximum eigenvalue of the Jacobian matrix at $(x^*, y^*)$ is strictly less than 1.
In order to obtain a rate of convergence, we give a more refined analysis, based on an idea utilized in \citet{NTHW21} (namely within the proof of their Theorem 5) for a fundamental problem in information theory.\footnote{In particular, the problem tackled by \citet{NTHW21} was  the convergence analysis of the Arimoto-Blahut algorithm for computing the Shannon's capacity of a discrete memoryless channel.}
\begin{theorem}
\label{theorem: geometric rate with uniqueness}
Let $(R,-R)$ be a zero-sum game with a unique NE $(x^*,y^*)$. For a sufficiently small $\eta$ and large enough $\xi$, such that $\eta\xi < 1$, the rate of convergence of the duality gap to the NE is geometric for the FLBR dynamics, in the form $A/b^t$, where $A$ and $b$ are determined by the norm of the Jacobian matrix evaluated at $(x^*,y^*)$.
\end{theorem}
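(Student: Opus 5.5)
We view one FLBR iteration as a smooth map $F:(x^{t-1},y^{t-1})\mapsto(x^t,y^t)$ on the product of the open simplices. This is possible because for finite $\xi$ both the intermediate and the update steps are compositions of softmax-type maps. The unique NE $(x^*,y^*)$ is a fixed point of $F$. Let $J=DF(x^*,y^*)$ be the Jacobian, restricted to the tangent space of the product of simplices (coordinates summing to zero), after the standard reduction that removes coordinates outside the supports. By the result of \citet{FMPV22} used for asymptotic convergence, under uniqueness and the regime $\eta\xi<1$ every eigenvalue of $J$ has modulus strictly less than $1$. Following the idea in the proof of Theorem 5 of \citet{NTHW21}, we do not work with the spectral radius directly. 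Instead we fix $\gamma$ with $\rho(J)<\gamma<1$ and choose an induced matrix norm $\|\cdot\|_*$ with $\|J\|_*\le\gamma$. Such a norm exists by the standard construction, e.g.\ via a Jordan form rescaling. We then set $b:=1/\gamma'$ for some $\gamma'\in(\gamma,1)$.

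The key steps are as follows. (i) Use the first-order Taylor expansion $F(z)-z^* = J(z-z^*)+r(z)$ with $\|r(z)\|_*\le L\|z-z^*\|_*^2$ on a neighbourhood of $z^*$; this holds since $F$ is $C^2$ for finite $\xi$. Consequently there is a radius $\delta_0>0$ such that $\|F(z)-z^*\|_*\le\gamma'\|z-z^*\|_*$ whenever $\|z-z^*\|_*\le\delta_0$. (ii) Invoke \Cref{cor:delta-close}, together with the equivalence of norms in finite dimension, to choose $\eta$ small enough that the profile $z^{t_0}$ produced by \Cref{th:flbrconvergence} lies in this $\delta_0$-ball. (iii) By induction, the ball is forward invariant, so $\|z^{t}-z^*\|_*\le\gamma'^{\,t-t_0}\delta_0$ for all $t\ge t_0$. (iv) Convert to the duality gap via \Cref{lem:norm to Duality gap}, which gives $V(x^t,y^t)\le\|x^t-x^*\|_1+\|y^t-y^*\|_1\le C\|z^t-z^*\|_*$. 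Here $C$ is the norm-equivalence constant. This yields $V(x^t,y^t)\le A/b^t$ with $A=C\delta_0\gamma'^{-t_0}$. Both $A$ and $b$ depend only on $J$ (through $\|J\|_*$ and the conditioning of the norm) and on the local curvature constant $L$.

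The main obstacle is the interplay between steps (i) and (ii). The Jacobian, and hence $\gamma$, $L$ and $\delta_0$, depend on $\eta$ and $\xi$. Meanwhile, \Cref{cor:delta-close} requires $\eta$ to shrink in order to land inside the contraction ball. Shrinking $\eta$ also pushes the eigenvalues of $J$ toward $1$ (as $1-\Theta(\eta)$) and may shrink $\delta_0$. I would first try to make the dependence explicit. I would write $J=I-\eta M+O(\eta^2)$ on the support, where $M$ depends on $\xi$ and $R$ but not on $\eta$. I would then show that $\delta_0$ can be taken independent of $\eta$, since both the linear contraction margin and the quadratic remainder scale linearly in $\eta$. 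If that works, the closeness radius from \Cref{cor:delta-close} can be met by a small enough $\eta$ without circularity. A secondary subtlety is the off-support coordinates, where $x^*_i=0$. There the Jacobian entries are multiplicative factors $e^{\eta(e_i^\top Ry^*-v)}<1$ when strict complementarity holds, which follows from uniqueness. These factors contract on their own, so they can be absorbed into the same block-diagonal norm.
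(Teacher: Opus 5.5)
Your proposal is correct and follows the same route as the paper. Both arguments use a local contraction estimate around $(x^*,y^*)$ obtained from the Jacobian there, forward invariance of a small ball by induction, entry into that ball at time $t_0$ via \Cref{cor:delta-close}, and conversion to the duality gap by norm equivalence and \Cref{lem:norm to Duality gap}; your adapted norm and Taylor remainder simply replace the paper's quoted bound $\|J(x^*,y^*)\|_q<1$, its mean-value step and continuity of $\|J\|_q$.

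Your remark that $J$ itself depends on $\eta$ is a point the paper's proof does not track, and your fix works, with two provisos. First, adapt the norm to $M$ in $J=I-\eta M+O(\eta^2)$ rather than to each $J$, so that $\delta_0$ and $C$ do not vary with $\eta$. Second, verify directly that $M$ has spectrum in the open right half-plane: on the support $M=-K(I+\xi K)$, where $K$ is the linearised continuous-time MWU field, whose eigenvalues $i\omega$ are purely imaginary and nonzero by uniqueness, giving real parts $\xi\omega^2>0$; off the support the eigenvalues are the strictly positive complementarity gaps such as $v-e_i^\top Ry^*$. This check is needed because $\rho(J)<1$ from \citet{FMPV22} by itself only gives $\mathrm{Re}\,\mu\ge 0$ for the eigenvalues $\mu$ of $M$.
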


\begin{remark}\label[remark]{rem:xi}
In order for \Cref{th:flbrconvergence} to hold, we needed $\xi$ to be larger than a quantity that depends on $\eta$. For \Cref{theorem: geometric rate with uniqueness} we need $\eta \xi < 1$. Unfortunately, both theorems cannot be guaranteed simultaneously with a single value of $\xi$. To verify this, we can lower bound $T_0$ by $\frac{1}{\eta^2}$ via the inequality $\ln(1-x) \le -x$. Hence, $\eta\xi > \frac{1}{\eta}$ which is unbounded. To remedy this issue, we propose a slight variant of the algorithm where we change the value of $\xi$ once we reach an approximate equilibrium, as guaranteed by \Cref{th:flbrconvergence}, so that from that point onwards, the inequality $\eta\xi < 1$ is satisfied. 
\end{remark}

The remainder of the section is dedicated to proving \Cref{theorem: geometric rate with uniqueness}.
First, we recall some basic facts established in \citet{FMPV22} that we use here, and for which uniqueness of equilibrium was needed. FLBR can be easily described as a discrete dynamical system, $\varphi(x,y) = \left(\varphi_1(x,y),\varphi_2(x,y) \right)$, such that $\varphi(x^t,y^t) = (x^{t+1},y^{t+1})$, and where $\varphi_{1,i}(x,y)$ is the $i$-th coordinate of $\varphi_1(x,y)$ and similarly for $\varphi_{2,i}(x,y)$, for any $i \in [n]$. The Jacobian of this system is a $2n\times 2n$ matrix, determined by the partial derivatives of $\varphi$.
Furthermore, when there exists a unique NE and $\eta\xi<1$, \citet{FMPV22} proved that there exists some $q\geq 1$, such that 
\[
    \lambda_{\max}\leq ||J(x^*,y^*)||_q  <1,
\] 
where $\lambda_{\max}$ is the maximum eigenvalue of the Jacobian matrix at the profile $(x^*,y^*)$.

For any $t\geq 0$, consider the strategy profile $(x(p),y(p)) = (1-p)\cdot (x^*,y^*) +p \cdot (x^t,y^t)$, with $ p \in (0,1)$, as a convex combination of the equilibrium and the profile $(x^t,y^t)$. In our proof, we will eventually need to argue about the Jacobian matrix at such convex combinations.

\begin{lemma}\label[lemma]{lem:mvt}
For $t \ge t_0$:  $|| (x^{t+1},y^{t+1})-(x^*,y^*)||_q \le ||(x^t,y^t)-(x^*,y^*)||_q\cdot ||J(x(p^t),y(p^t))||_q$.
\end{lemma}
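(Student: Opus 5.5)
The plan is to use a mean value argument along the segment from the equilibrium to the current iterate, with the equilibrium as a fixed point of $\varphi$. The key observation is that $(x^*,y^*)$ is a fixed point of the dynamical system $\varphi$, since it is an interior-of-support equilibrium.

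First, we check that $\varphi(x^*,y^*)=(x^*,y^*)$. At the NE every pure strategy in the support of $x^*$ earns the value $v$ against $y^*$. Hence the intermediate step leaves the support coordinates' ratios unchanged, so $\hat y^{t}$ computed from $(x^*,y^*)$ is again $y^*$ (and symmetrically for $\hat x$). The update step with $\eta$ then multiplies every supported coordinate by the same factor $e^{\eta v}$, so normalization returns $x^*$. Off-support coordinates are zero and stay zero because the updates are multiplicative.

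Next, we write $x^{t+1}-x^* = \varphi(x^t,y^t)-\varphi(x^*,y^*)$. Consider the curve $g(p)=\varphi(x(p),y(p))$ for $p\in[0,1]$, where $(x(p),y(p))=(1-p)(x^*,y^*)+p(x^t,y^t)$. The map $\varphi$ is a composition of exponentials, products and normalizations with strictly positive denominators, so it is smooth on a neighbourhood of the simplex product. Therefore $g$ is $C^1$, and $g'(p)=J(x(p),y(p))\,\big((x^t,y^t)-(x^*,y^*)\big)$.

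Since the mean value theorem does not hold with equality for vector-valued maps, we proceed in one of two ways.

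One option is the integral form. Here $g(1)-g(0)=\int_0^1 J(x(p),y(p))\,d\,\cdot\,((x^t,y^t)-(x^*,y^*))\,dp$. Taking $q$-norms gives the bound $\sup_p\|J(x(p),y(p))\|_q\cdot\|(x^t,y^t)-(x^*,y^*)\|_q$. Because $p\mapsto \|J(x(p),y(p))\|_q$ is continuous on a compact interval, the supremum is attained at some $p^t\in[0,1]$.

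The other option is to apply the scalar mean value theorem to $h(p)=\langle w, g(p)\rangle$. Here $w$ is a dual vector with $\|w\|_{q'}=1$ achieving $\langle w, g(1)-g(0)\rangle=\|g(1)-g(0)\|_q$. This yields $p^t\in(0,1)$ with $\|g(1)-g(0)\|_q = \langle w, J(x(p^t),y(p^t))\Delta\rangle \le \|J(x(p^t),y(p^t))\|_q\|\Delta\|_q$ by Hölder's inequality, where $\Delta=(x^t,y^t)-(x^*,y^*)$. I will use this second route, since it matches the statement's choice of a single $p^t\in(0,1)$.

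The main subtlety is the role of $t\ge t_0$. The lemma itself only needs smoothness of $\varphi$ along the segment, which holds for all $t$. The restriction matters downstream, where by \Cref{cor:delta-close} the whole segment lies within $\delta$ of $(x^*,y^*)$, so that continuity of $J$ gives $\|J(x(p^t),y(p^t))\|_q<1$. I would remark on this but not need it for the inequality. One detail needs care: the Jacobian must be taken in the ambient coordinates, with $\varphi$ defined on the positive orthant via the normalization formulas, so that derivatives along the segment, which stays in the simplex, are well defined.
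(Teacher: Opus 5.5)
Your proof is correct and follows the same route as the paper: the mean value theorem for $f^t(p)=\varphi(x(p),y(p))$ on $[0,1]$, the chain rule of Claim~\ref{cl:delta_phi}, and the induced-norm bound $\|J\Delta\|_q\le\|J\|_q\|\Delta\|_q$ with $\Delta=(x^t,y^t)-(x^*,y^*)$. Two of your steps justify what the paper uses without comment: reducing to the scalar mean value theorem through a norming functional $w$ with $\|w\|_{q'}=1$ makes the vector-valued step rigorous, and checking $\varphi(x^*,y^*)=(x^*,y^*)$ confirms the fixed-point identity.
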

First we show the following claim that we use in the proof of the above lemma. 
\begin{claim}\label{cl:delta_phi}
$\displaystyle\frac{d\varphi(x(p),y(p))}{dp} =  J(x(p),y(p)) \cdot \Big( x^t -x^* , y^t- y^*  \Big).$
\end{claim}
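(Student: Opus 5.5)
This is a chain-rule computation along the segment from $(x^*,y^*)$ to $(x^t,y^t)$. Define the curve $\gamma(p) = (x(p),y(p)) = (1-p)(x^*,y^*) + p(x^t,y^t)$ for $p\in[0,1]$. It is affine in $p$, so its derivative is constant:
\[
\gamma'(p) = \big(x^t - x^*,\; y^t - y^*\big).
\]
The map $\varphi=(\varphi_1,\varphi_2)$ is a composition of exponentials, linear maps and normalizations by strictly positive sums. Hence it is $C^\infty$ on an open neighbourhood of the relative interior of $\Delta_n\times\Delta_n$. It extends smoothly to all $(x,y)\in\mathbb{R}^{2n}$ for which the normalizing denominators are positive.

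Next we check that the segment lies where $\varphi$ is differentiable. Every iterate $(x^t,y^t)$ has strictly positive coordinates, and $(x^*,y^*)$ lies in the simplex. So for $p\in(0,1)$ every coordinate of $\gamma(p)$ is strictly positive and the denominators in both the intermediate and update steps are positive. This step needs a little care because $(x^*,y^*)$ may lie on the boundary. The formulas defining $\varphi$ remain well-defined and smooth there (zero coordinates stay zero), so the composition makes sense on the whole closed segment.

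Then we apply the multivariate chain rule to $p\mapsto \varphi(\gamma(p))$, coordinate by coordinate. For each component $\varphi_{k,i}$ with $k\in\{1,2\}$, $i\in[n]$,
\[
\frac{d}{dp}\varphi_{k,i}(\gamma(p)) = \sum_{j=1}^{n}\frac{\partial \varphi_{k,i}}{\partial x_j}(\gamma(p))\,(x^t_j - x^*_j) + \sum_{j=1}^{n}\frac{\partial \varphi_{k,i}}{\partial y_j}(\gamma(p))\,(y^t_j - y^*_j).
\]
Stacking these $2n$ rows, the right-hand side is exactly the $(k,i)$ row of $J(x(p),y(p))$ applied to the $2n$-vector $(x^t-x^*, y^t-y^*)$. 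This is the definition of the Jacobian as the matrix of partial derivatives of $\varphi$, and it gives the claim.

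The only real subtlety is making sure the Jacobian here is the same object used in \citet{FMPV22}. It must be the full $2n\times 2n$ matrix of partial derivatives with respect to all coordinates, with $\hat{x},\hat{y}$ substituted as functions of $(x,y)$, rather than a Jacobian restricted to the tangent space of the simplex. If $J$ is meant in the restricted sense, the identity still holds. The displacement $(x^t-x^*, y^t-y^*)$ has coordinates summing to zero in each block, so it lies in that tangent space, and applying either version of $J$ to it gives the same vector.

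This claim is then what feeds \Cref{lem:mvt}. Integrating from $0$ to $1$ gives $\varphi(x^t,y^t)-\varphi(x^*,y^*)$. Using $\varphi(x^*,y^*)=(x^*,y^*)$ (the equilibrium is a fixed point), a mean-value argument over $p$ yields a bound of the form $\|J(x(p^t),y(p^t))\|_q\,\|(x^t,y^t)-(x^*,y^*)\|_q$.
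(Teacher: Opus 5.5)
Your proposal is correct and follows the paper's proof. Both apply the multivariate chain rule along the affine segment, using $dx_k(p)/dp = x^t_k - x^*_k$ and $dy_\ell(p)/dp = y^t_\ell - y^*_\ell$, and then stack the $2n$ coordinate identities into $J(x(p),y(p))$ times the displacement; your remarks on smoothness of $\varphi$ near the closed segment (boundary $(x^*,y^*)$ included) and on the tangent-space reading of $J$ are sound details the paper leaves implicit.
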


In the equation above, the term $(x^t -x^* , y^t -y^*)$ is a vector of $2n$ coordinates, where for each $i\in [n]$ the $i$-th coordinate equals $x^t_i - x^*_i$, and the $(n+i)$-th coordinate equals $y^t_i - y^*_i$.

\begin{proof}
For the row player, we have that for any $i$,
\begin{align*}
\frac{d\varphi_{1,i}(x(p),y(p))}{dp} &=
\sum_k \frac{dx_k(p)}{dp} \cdot \frac{d\varphi_{1,i}(x(p),y(p))}{dx_k(p)} + \sum_\ell \frac{dy_\ell(p)}{dp} \cdot \frac{d\varphi_{1,i}(x(p),y(p))}{dy_\ell(p)}\\
&= \sum_k \Big(   x^t_k -x^*_k \Big) \cdot J(x(p),y(p))_{ik}+\sum_\ell \Big( y^t_\ell -y^*_\ell  \Big) \cdot J(x(p),y(p))_{i,n+\ell}.
\end{align*}
The above holds because $\frac{dx_k(p)}{dp} = x^t_k -x^*_k$ and $\frac{dy_\ell(p)}{dp} = y^t_\ell -y^*_\ell$. Analogous expressions hold for  $\varphi_2$ as well, thus we conclude that
\begin{equation*}
\frac{d\varphi(x(p),y(p))}{dp} =  J(x(p),y(p)) \cdot \Big( x^t -x^* , y^t- y^*  \Big).\tag*{\qedhere}
\end{equation*}
\end{proof}

\begin{proof}[Proof of \Cref{lem:mvt}]
By the Mean Value Theorem (applied for our function $f^t = \varphi(x(p),y(p)):\mathbb{R}\to \mathbb{R}^{2n}$), for each time $t$, there is a $p^t \in (0,1)$ s.t.
\begin{align*}
|| (x^{t+1},y^{t+1})-(x^*,y^*)||_q &= || \Big(\varphi_1(x^{t},y^{t}),\varphi_2(x^t,y^t) \Big) - \Big(\varphi_1(x^*,y^*),\varphi_2(x^*,y^*) \Big) ||_q \\
&= ||f^t(1)-f^t(0)||_q  \\
&\leq ||\frac{df^t(p)}{dp}|_{p = p^t}||_q \cdot (1-0)\\
&= ||\Big((x^t,y^t)-(x^*,y^*)\Big)\cdot J(x(p^t),y(p^t))||_q  \\
&\leq ||(x^t,y^t)-(x^*,y^*)||_q\cdot ||J(x(p^t),y(p^t))||_q. 
\end{align*}
where the second inequality holds by the properties of the $q$-norm.
\end{proof}

With the above lemma and the continuity of the norm, we prove by induction the following:

\begin{lemma}\label[lemma]{lem:J-close}
    Given $\varepsilon>0$, there exists a sufficiently small $\delta>0$, such that if $|| (x^{t_0},y^{t_0})-(x^*,y^*)||_q \leq \delta$, then for any $t\geq t_0$. 
\end{lemma}
\begin{proof}
For the basis of the induction, consider $t = t_0$.
Regarding the Jacobian, first note that 
\begin{align*} ||(x(p^{t_0}),y(p^{t_0})) - (x^*,y^*) ||_q 
& = ||(1-p^{t_0}) (x^*,y^*) + p^{t_0}(x^{t_0},y^{t_0}) - (x^*,y^*) ||_q \\
& = ||p^{t_0}(x^{t_0},y^{t_0}) - p^{t_0}(x^*,y^*) ||_q \\
& \leq ||(x^{t_0},y^{t_0}) - (x^*,y^*) ||_q.
\end{align*}

Furthermore, by the continuity of the norm, for the given $\varepsilon$, there exists $\delta>0$ s.t. 
if $||(x^*,y^*) -  (x(p^{t_0}),y(p^{t_0}))||_q<\delta$, then $\Big|||J(x(p^{t_0}),y(p^{t_0}))||_q - ||J(x^*,y^*)||_q \Big| < \varepsilon$. 
Therefore, if we use this value of $\delta$, we get that if $|| (x^{t_0},y^{t_0})-(x^*,y^*)||_q \leq \delta$, then
$||(x(p^{t_0}),y(p^{t_0})) - (x^*,y^*) ||_q < \delta$ (by the previous analysis), and consequently
$||J(x(p^{t_0}),y(p^{t_0}))||_q  < ||J(x^*,y^*)||_q + \varepsilon$. 
This establishes the basis.

For the induction step, assume that the condition holds for some $t\geq t_0$. We will establish it for $t+1$. 

Since we have assumed that $\varepsilon$ satisfies $||J(x^*,y^*)||_q + \varepsilon<1$, the induction hypothesis yields that $||J(x(p^{t}),y(p^{t}))||_q < 1$.
Using this and \Cref{lem:mvt}, we get that $|| (x^{t+1},y^{t+1})-(x^*,y^*)||_q < || (x^{t},y^{t})-(x^*,y^*)||_q$. 
This also implies that if  $|| (x^{t_0},y^{t_0})-(x^*,y^*)||_q \leq \delta$,  this propagates throughout all the iterations for the same $\delta$, so that
$|| (x^{t+1},y^{t+1})-(x^*,y^*)||_q < \delta$.
And this in turn yields
\begin{align*} ||(x(p^{t+1}),y(p^{t+1})) - (x^*,y^*) ||_q 
& = ||(1-p^{t+1}) (x^*,y^*) + p^{t+1}(x^{t+1},y^{t+1}) - (x^*,y^*) ||_q \\
& = ||p^{t+1}(x^{t+1},y^{t+1}) - p^{t+1}(x^*,y^*) ||_q \\
& \leq ||(x^{t+1},y^{t+1}) - (x^*,y^*) ||_q\\
& < \delta.
\end{align*}

To finish the proof, we use the same argument as in the induction basis. Namely,  by the continuity of the norm, for the given $\varepsilon$, and for the $\delta$ that was identified in the induction basis, we will have that $\Big|||J(x(p^{t+1}),y(p^{t+1}))||_q - ||J(x^*,y^*)||_q \Big| < \varepsilon$, and thus 
\[||J(x(p^{t+1}),y(p^{t+1}))||_q  < ||J(x^*,y^*)||_q  + \varepsilon < 1.\qedhere \]
\end{proof}

We are now ready to complete the proof of our main theorem.
\begin{proof}[Proof of \Cref{theorem: geometric rate with uniqueness}]
Fix now a small $\varepsilon>0$ and let $\lambda = ||J(x^*,y^*)||_q + \varepsilon$ so that $\lambda<1$. 
By \Cref{lem:J-close} and applying repeatedly \Cref{lem:mvt}, we have that, for any $t\geq t_0$,$
|| (x^{t},y^{t})-(x^*,y^*)||_q < \lambda^{t-t_0} \cdot ||(x^{t_0},y^{t_0})-(x^*,y^*)||_q.$
Therefore, given $\varepsilon>0$, if we pick a sufficiently small $\eta$, we can ensure that there exists a small $\delta>0$, such that \Cref{cor:delta-close} holds with this $\delta$, i.e., $||(x^{t_0},y^{t_0})-(x^*,y^*)||_q< \delta$, and at the same time \Cref{lem:J-close} holds with the chosen $\varepsilon$ (and again for this $\delta)$.  
By the equivalence of the norms, all these yield that $|| (x^{t},y^{t})-(x^*,y^*)||_1 < K \cdot \delta \cdot \lambda^{t-t_0},$ for some integer $K>0$ independent of $t$, and dependent on $q$. This directly bounds the $L_1$ distances from the equilibrium strategies and, by applying \Cref{lem:norm to Duality gap}, we conclude that
\begin{equation}
    V(x^t, y^t) \leq 2 K \cdot \delta \cdot  \lambda^{t-t_0} + v - v = O(K \cdot \delta \cdot\lambda^t). \tag*{\qedhere}
\end{equation}
\end{proof}

As a final remark of this section, we would like to compare the convergence of FLBR (with the modification we just introduced as per \Cref{rem:xi})
against the convergence of OMWU, as described in \citet{Wei2021LinearLC}. The theoretical analysis of OMWU has a similar flavor with our work, in the sense that it also establishes convergence in two phases. Both works have a slow first phase and a much faster second one (and with game-dependent parameters for both methods). An advantage of our analysis is that it yields a simpler proof of convergence. At the same time, an advantage of the approach by \citet{Wei2021LinearLC} is that it determines a specific range of $\eta$ that works for any game, i.e., selecting any $\eta \le \frac{1}{8}$ is sufficient. Finally, in \Cref{sec:exper}, we also consider empirical comparisons, where FLBR has a much better performance. 
\section{Regret analysis}
\label{sec:regret}
In this section, we focus on yet another previously unexplored aspect of the FLBR method, beyond convergence. 

Most importantly, a fundamental question is whether FLBR is a no-regret algorithm. We provide a negative answer for this. 

So far, in the literature of methods with last-iterate convergence, there exist both no-regret algorithms (such as Optimistic MWU \citep{Daskalakis2019LastIterateCZ}) and algorithms with regret (such as Extra-Gradient). 
We note that the existence of regret by itself is not necessarily a negative indication for an algorithm's performance. For example, OMWU is outperformed by algorithms that have regret, as discussed in \citet{caiforget}.

\begin{theorem}\label{thm:regret}
    FLBR is not a no-regret algorithm when $\xi$ is sufficiently large.
\end{theorem}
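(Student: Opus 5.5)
To prove \Cref{thm:regret}, I will exhibit an adversarial sequence of payoff vectors for a single player, say the row player, on which FLBR incurs regret growing linearly in $T$. In the online setting, at each round $t$ the row player uses $x^{t-1}$ to form the intermediate step $\hat{x}^t$. It then plays an action that depends on the observed payoff vectors, where the intermediate step uses the previous loss, and updates with learning rate $\eta$ using the forward-looking prediction. When $\xi$ is large, \Cref{fact:best-response} shows that the prediction $\hat{y}^t$ behaves like a proportional best response of the opponent to $x^{t-1}$. The idea is to choose the opponent's actual play so that it punishes exactly the strategy FLBR moves towards. This makes the update step systematically move mass in the wrong direction.

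Concretely, I would use a $2\times 2$ game, e.g.\ matching pennies with $R=\begin{pmatrix}1&0\\0&1\end{pmatrix}$, and let the column player be an adversary rather than a learner. The regret of the row player is $\max_i\sum_t e_i^\top R y^t-\sum_t (x^t)^\top R y^t$. The adversary plays $y^t$ chosen as the pure best response (for the column player) against $x^t$, so the row player's realized payoff is $\min_j (x^t)^\top R e_j\le 1/2$ every round. The row player's update, however, is driven by $\hat{y}^t$, which with $\xi\to\infty$ is the proportional best response to the previous column strategy $y^{t-1}$ and so is also pure. I would then track $x^t$. Each update multiplies the weight of one action by $e^{\eta}$ relative to the other, and the row player's payoff is always the smaller coordinate of $x^t$.

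The key step is to show that the realized sequence $y^t$ is sufficiently unbalanced. Either one column action is played a constant fraction more often, or the alternation pattern makes the fixed best action earn at least $c>1/2$ on average, while FLBR earns at most $1/2-\Omega(1)$ or exactly $\min(x^t_1,x^t_2)$. If the dynamics oscillate around $x=(1/2,1/2)$, both sides approach $1/2$ and there is no linear regret. For this reason I expect to need an adaptive adversary instead: choose $y^t$ to be the action that the \emph{predicted} $\hat{y}^{t+1}$ will not be. The intent is that the update step always reinforces the row action that then gets punished, making $x^t$ trend toward one action while the adversary plays the column where that action earns $0$. Equivalently, I can decouple the feedback: the intermediate step sees $Ry^{t-1}$, so the adversary alternates $y^t$ between $e_1$ and $e_2$. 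I would then compute the trajectory in closed form, with ratios $x_1^t/x_2^t=e^{\pm\eta}$ oscillating. From it I would show that FLBR's average payoff is strictly below $1/2$, while the fixed action's average payoff is exactly $1/2$, giving regret $\Omega(T)$ for a constant that depends on $\eta$.

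The main obstacle is getting this lag mismatch right. FLBR's update uses information from the intermediate step, which is computed from the previous round's opponent play, and this is precisely the feature to exploit. I would first try a period-2 or period-3 adversarial sequence and verify by direct computation, using the $\xi\to\infty$ limit from \Cref{fact:best-response} together with continuity for large finite $\xi$. The check is that the weights oscillate in anti-phase with the payoffs, so that every round incurs a constant loss gap of order $\tanh(\eta/2)$. Continuity in $\xi$ then transfers the strict gap to sufficiently large $\xi$.
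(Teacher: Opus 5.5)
Your plan has a genuine gap. First, you never fix what FLBR does as an online algorithm when the column player is an adversary. The row update uses $R\hat{y}^t$, where $\hat{y}^t$ is the \emph{opponent's} intermediate iterate. Your description of it shifts between three things: a prediction computed from $y^{t-1}$ (it is the proportional best response against $x^{t-1}$, weighted by $y^{t-1}$, not a response to $y^{t-1}$), something the adversary steers, and a ``decoupled'' feedback $Ry^{t-1}$.

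More seriously, take the only reading in which your final computation is well defined, namely pure adversarial plays $y^{t-1}\in\{e_1,e_2\}$. The multiplicative intermediate step preserves supports, so $\hat{y}^t=y^{t-1}$ for every finite $\xi$. (The formula of \Cref{fact:best-response} is not even defined when $y^{t-1}$ puts no mass on a best response.) Your iterates are then exactly Hedge, $x^t\propto x^{t-1}\odot e^{\eta Ry^{t-1}}$. The anti-phase oscillation you plan to verify is the textbook fixed-step lower bound for MWU, with regret of order $\tanh(\eta/2)\,T$. That bound:
\begin{itemize}
\item holds verbatim for plain MWU, which the paper itself lists as a no-regret algorithm;
\item vanishes as $\eta\to0$ and is $O(\sqrt{T})$ under the usual tuning $\eta\asymp T^{-1/2}$;
\item never uses the hypothesis that $\xi$ is large, so your closing continuity-in-$\xi$ step is vacuous.
\end{itemize}
So the plan, as written, does not establish the theorem.

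The missing ingredient is a model in which large $\xi$ actually matters. The paper restates each FLBR iteration as two online rounds, $x^{2t}\propto x^{2t-1}\odot e^{\xi Ry^{2t-1}}$ and $x^{2t+1}\propto x^{2t-1}\odot e^{\eta Ry^{2t}}$, and counts both in the regret. It then argues in self-play on matching pennies (entries $\pm1$) from $x^{-1}=(1-\delta,\delta)$, $y^{-1}=(\delta,1-\delta)$. As $\xi\to\infty$, both intermediate iterates are always $(0,1)$. This gives the closed form $x_1^{2t+1}=(1-\delta)/[1-\delta(1-e^{2\eta(t+1)})]$. Writing $a^t=x_1^{2t+1}$, with the horizon chosen so that $a^T=\tfrac12$ and $\delta$ small, the regret against $(0,1)$ is $\sum_t 2a^t(2a^t-1)\ge 0.855\,T$ over $2T$ rounds.

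Your adversarial route can also be completed in this two-round model, even from the uniform start. Feed the period-four payoff vectors $(1,0),(0,1),(0,1),(1,0),\dots$ starting at round $-1$ (e.g.\ $R=I$ with the column playing $e_1,e_2,e_2,e_1,\dots$). Then $x^{4k-1}=(\tfrac12,\tfrac12)$, $x^{4k}\propto(e^{\xi},1)$, $x^{4k+1}\propto(1,e^{\eta})$ and $x^{4k+2}\propto(1,e^{\eta+\xi})$. In each cycle FLBR earns $\tfrac12+\tfrac{1}{1+e^{\xi}}+\tfrac{e^{\eta}}{1+e^{\eta}}+\tfrac{1}{1+e^{\eta+\xi}}$, while each fixed action earns $2$. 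The regret therefore exceeds $\tfrac12\tanh(\xi/2)$ per four rounds for every $\eta>0$. On the same sequence plain MWU loses only $\tanh(\eta/2)$ per cycle, so this linear regret is caused by $\xi$, not by the fixed step size.
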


We first restate the FLBR dynamics, so that each iteration is replaced by two steps. We do this so as to explicitly view FLBR within the framework of online learning algorithms with gradient feedback. Hence, in each step, each player observes the payoff of her pure strategies\footnote{Note that this is precisely the gradient information, since e.g. $\frac{\partial (x^t)^\top  Ry^t}{\partial x_i} = e_i^\top Ry^t$.} and updates the mixed strategy accordingly.
This gives the following formulation for the row player (and analogously for the column player). For technical convenience, we assume the initial profile is indexed as $(x^{-1}, y^{-1})$:
    \begin{gather*}
         x_i^{2t} = x_i^{2t-1}\cdot \frac{e^{\xi \cdot e_i^\top  R{y}^{2t-1} }}{\sum\limits_{j} x_j^{2t-1}\cdot e^{\xi \cdot e_j^\top  R{y}^{2t-1}}},\
         x_i^{2t+1} = x_i^{2t-1}\cdot \frac{e^{\eta \cdot e_i^\top  R{y}^{2t} }}{\sum\limits_{j} x_j^{2t-1}\cdot e^{\eta \cdot e_j^\top  R{y}^{2t}}}, \quad t\ge 0. \label{eq1}
    \end{gather*}
     The example that we use for proving the theorem is the simple Matching Pennies game: 
    \[R = \begin{bmatrix} +1 & -1\\ -1 & +1\end{bmatrix}.\]
Note that we have assumed the games we study are in $[0, 1]$. We could bring the Matching Pennies game in this form as well, but for ease of presentation, we present the analysis here for the original form of the game. 
We use the initialization $x^{-1} = (1-\delta, \delta)$ and $y^{-1} = (\delta, 1 - \delta)$, for some small $\delta \in (0, 1/2)$. With this at hand, we can break down the proof of \Cref{thm:regret} into the lemmata that follow.
For simplicity, we will carry out the proof here assuming $\xi\to\infty$. Under this, note that by \Cref{fact:best-response},  $x^{0}$ is a best response to $y^{-1}$, and hence we get that $x^0 = (0, 1)$. In fact, we can inductively extend this argument.
\begin{claim}\label{cl:less than 1/2}
        For any $t\geq 0$, it holds that $x_1^{2t-1} > \frac{1}{2}$ and $y_1^{2t-1} < \frac{1}{2}$.
\end{claim}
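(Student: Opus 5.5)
We argue by induction on $t$, in the limit $\xi\to\infty$, so that by \Cref{fact:best-response} each intermediate iterate $x^{2t}$ (resp.\ $y^{2t}$) is the proportional best response to $y^{2t-1}$ (resp.\ $x^{2t-1}$). The base case $t=0$ holds by the initialization: $x^{-1}_1=1-\delta>\frac12$ and $y^{-1}_1=\delta<\frac12$, since $\delta<\frac12$.

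For the inductive step, assume $x_1^{2t-1}>\frac12$ and $y_1^{2t-1}<\frac12$.
\begin{itemize}[noitemsep]
\item \emph{Intermediate step.} In Matching Pennies, $e_1^\top R y^{2t-1}-e_2^\top R y^{2t-1}=2(2y_1^{2t-1}-1)<0$. Hence the unique best response of the row player is pure strategy $2$, and $x^{2t}=(0,1)$. Symmetrically, $(x^{2t-1})^\top R e_1-(x^{2t-1})^\top R e_2=2(2x_1^{2t-1}-1)>0$. The column player minimizes, so $y^{2t}=(0,1)$.
\item \emph{Update step.} Plugging these into the update rules gives $e_1^\top R y^{2t}=-1$ and $e_2^\top R y^{2t}=+1$. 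Writing the update in odds form,
\[
\frac{x_1^{2t+1}}{x_2^{2t+1}}=\frac{x_1^{2t-1}}{x_2^{2t-1}}\,e^{-2\eta},\qquad
\frac{y_1^{2t+1}}{y_2^{2t+1}}=\frac{y_1^{2t-1}}{y_2^{2t-1}}\,e^{2\eta}.
\]
\end{itemize}

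The inductive step therefore reduces to showing that the odds $x_1/x_2$ remain above $1$ and the odds $y_1/y_2$ remain below $1$. Unrolling the recursion, the odds equal $\frac{1-\delta}{\delta}e^{-2\eta(t+1)}$ and $\frac{\delta}{1-\delta}e^{2\eta(t+1)}$, respectively. The plan is to make these exceed (resp.\ stay below) $1$ by taking $\delta$ small.

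This is the main obstacle, and I expect the plan to fail at exactly this point. The factor $e^{-2\eta(t+1)}$ drifts monotonically, so for any fixed $\delta>0$ and $\eta>0$ the row odds eventually drop below $1$. No choice of $\delta$ independent of $t$ can keep them above $1$ for every $t\ge 0$, so the induction as planned does not close for all $t$.

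I would therefore check first whether the regret argument that follows in the paper only uses the claim over a finite horizon. If it does, $\delta$ can be chosen as a function of that horizon, though this yields only a horizon-restricted version of the claim and not the statement as worded. Short of that, I do not see how to obtain the statement for all $t\ge 0$ from this induction.
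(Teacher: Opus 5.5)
Your computation is correct, and so is your diagnosis: the claim is false as worded, so no argument can close the induction for all $t\ge 0$. You can state this more decisively than ``the plan fails'', because your inductive step is an identity, not an estimate. If the claim holds at indices $0,\dots,t-1$, then $x^{2s}=y^{2s}=(0,1)$ for all $s<t$, and therefore
\[
\frac{x_1^{2t-1}}{x_2^{2t-1}}=\frac{1-\delta}{\delta}\,e^{-2\eta t},\qquad
\frac{y_1^{2t-1}}{y_2^{2t-1}}=\frac{\delta}{1-\delta}\,e^{2\eta t}.
\]
Hence the claim holds exactly for $0\le t<\frac{1}{2\eta}\ln\frac{1-\delta}{\delta}$. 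It fails at the smallest integer $t\ge\frac{1}{2\eta}\ln\frac{1-\delta}{\delta}$. For instance, $\delta=0.1$ and $\eta=1/2$ give $x_1^{5}=9e^{-3}/(1+9e^{-3})\approx 0.31$. Equivalently, if the claim held for every $t$, the closed form of \Cref{lem:x_closed_form} would hold for every $t$, yet that expression tends to $0$.

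The paper supplies no separate proof of this claim. It only remarks that the best-response argument giving $x^0=(0,1)$ ``can be inductively extended''. That is precisely your induction, and it carries the same unstated horizon restriction. Your suspicion about how the claim is used is also right. In \Cref{lem:regret_done} the horizon is pinned down by $a^T=1/2$, i.e.\ $2\eta(T+1)=\ln\frac{1-\delta}{\delta}$. So $\delta$ is effectively chosen as a function of $T$, and the claim is invoked only for indices $t\le T$, where it does hold. The correct statement is therefore your finite-horizon version. For that version your argument is a complete proof and coincides with the one the paper intends.
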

Pairing this with \Cref{fact:best-response}, we get that $x^{2t}= (0, 1)$ (as a best response to $y^{2t-1}$, for any $t$) and symmetrically $y^{2t} = (0, 1)$. Now we are in position to explicitly compute $x_1^{2t-1}$.

\begin{lemma}\label[lemma]{lem:x_closed_form}
    For sufficiently large $\xi$ we get $\displaystyle x_1^{2t+1} = (1-\delta)[1 - \delta(1 - e^{2\eta(t+1)})]^{-1}$. 
\end{lemma}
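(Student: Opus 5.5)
The plan is to prove the closed form by tracking the likelihood ratio $r_t := x_1^{2t+1}/x_2^{2t+1}$ and showing by induction on $t\ge 0$ that
\[
r_t = \frac{1-\delta}{\delta}\, e^{-2\eta(t+1)} .
\]
Once this holds, $x_1^{2t+1} = r_t/(1+r_t) = (1-\delta)/\big((1-\delta)+\delta e^{2\eta(t+1)}\big)$. The denominator equals $1-\delta(1-e^{2\eta(t+1)})$, which is exactly the claimed expression. Working with the ratio is convenient because the normalizing constant of the multiplicative update cancels.

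\textbf{Step 1: intermediate profiles.} First I justify $x^{2t} = y^{2t} = (0,1)$ for every $t\ge 0$, using \Cref{cl:less than 1/2} together with \Cref{fact:best-response} in the limit $\xi\to\infty$.
\begin{itemize}[noitemsep]
\item For the row player, $e_1^\top R y^{2t-1} = y_1^{2t-1}-y_2^{2t-1} = 2y_1^{2t-1}-1 < 0$, while $e_2^\top R y^{2t-1} = 1-2y_1^{2t-1} > 0$. So the unique best response is pure strategy $2$, and the proportional best response is $(0,1)$.
\item For the column player, who minimizes $x^\top R e_j$, we have $x^\top R e_1 = 2x_1^{2t-1}-1 > 0$ and $x^\top R e_2 = 1-2x_1^{2t-1} < 0$. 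So the unique best response is strategy $2$, giving $y^{2t}=(0,1)$.
\end{itemize}

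\textbf{Step 2: update step on the ratio.} Plug $y^{2t} = (0,1)$ into the update $x_i^{2t+1} \propto x_i^{2t-1} e^{\eta e_i^\top R y^{2t}}$. Since $e_1^\top R y^{2t} = -1$ and $e_2^\top R y^{2t} = +1$, this gives $r_t = r_{t-1}\, e^{-2\eta}$. The base case $t=0$ uses $x^{-1} = (1-\delta,\delta)$: $r_0 = \frac{1-\delta}{\delta} e^{-2\eta}$, which matches the formula. Unrolling the recursion gives the general $t$.

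\textbf{Main obstacle.} The only delicate point is the interplay with \Cref{cl:less than 1/2}. The ratio shrinks geometrically, so $x_1^{2t+1} > 1/2$ can only hold while $\delta e^{2\eta(t+1)} < 1-\delta$. I would therefore run the induction jointly with the claim: at each step, the current closed form certifies $x_1^{2t+1} > 1/2$ (and symmetrically $y_1^{2t+1} < 1/2$), which is what Step 1 needs at the next step. For finite but large $\xi$, the intermediate strategies are only approximately $(0,1)$. In that case I would invoke the continuity in \Cref{fact:best-response} to make the error arbitrarily small over the finite horizon considered, which is what the phrase ``for sufficiently large $\xi$'' in the statement covers.
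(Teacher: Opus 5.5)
Your proof is correct and is essentially the paper's argument. The paper tracks $1/a^t - 1 = x_2^{2t+1}/x_1^{2t+1}$, the reciprocal of your $r_t$. From the same update with $y^{2t}=(0,1)$ it gets $1/a^t - 1 = e^{2\eta}(1/a^{t-1}-1)$, unrolls from $a^{-1}=1-\delta$, and works in the $\xi\to\infty$ limit, just as you do.

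Your caveat about the horizon is also well taken. The closed form forces $x_1^{2t+1}\to 0$, so the paper's claim that $x_1^{2t-1}>1/2$ for all $t$ (and hence the lemma) can only hold while $\delta e^{2\eta t} < 1-\delta$. This is the horizon the paper implicitly relies on when it later sets $a^T = 1/2$ in the regret analysis.
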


\begin{proof}
Recall that 
       \begin{gather*}
        x_1^{2t+1} = x_1^{2t-1}\cdot \frac{e^{\eta \cdot e_1^\top  R{y}^{2t} }}{\sum\limits_{j} x_j^{2t-1}\cdot e^{\eta \cdot e_j^\top  R{y}^{2t}}} = x_1^{2t-1}\cdot \frac{e^{-\eta}}{\sum\limits_{j} x_j^{2t-1}\cdot e^{\eta \cdot e_j^\top  R{y}^{2t}}},\\
        x_2^{2t+1} = x_2^{2t-1}\cdot \frac{e^{\eta \cdot e_2^\top  R{y}^{2t} }}{\sum\limits_{j} x_j^{2t-1}\cdot e^{\eta \cdot e_j^\top  R{y}^{2t}}} = x_2^{2t-1} \cdot \frac{e^{\eta}}{\sum\limits_{j} x_j^{2t-1}\cdot e^{\eta \cdot e_j^\top  R{y}^{2t}}}. 
    \end{gather*}
    For brevity, let $x_1^{2t+1} = a^t$ and $x_2^{2t+1} = b^t$ we get that
    \begin{gather*}   
    a^t = a^{t-1} \cdot \frac{e^{-\eta}}{a^{t-1} e^{-\eta} + b^{t-1} e^{\eta}}, \\
    b^t = b^{t-1} \cdot \frac{e^{\eta}}{a^{t-1} e^{-\eta} + b^{t-1} e^{\eta}}.
    \end{gather*}
    Note that $a^t + b^t = 1$ so we get
    \begin{gather*}a^t = a^{t-1} \cdot \frac{e^{-\eta}}{a^{t-1} e^{-\eta} + (1-a^{t-1})e^{\eta}} = \frac{a^{t-1} e^{-\eta}}{a^{t-1}(e^{-\eta} - e^{\eta}) + e^{\eta}} \implies  \\
    \frac{1}{a^t} = 1 - e^{2\eta} + e^{2\eta}\frac{1}{a^{t-1}} \implies \\
    \frac{1}{a^t} - 1 = e^{2\eta} \left(\frac{1}{a^{t-1}} - 1\right) \implies\\
    \frac{1}{a^t} - 1 = e^{2\eta(t+1)}\left(\frac{1}{a^{-1}} - 1\right).
    \end{gather*}
    Recall that $a^{-1} = x_1^{-1} = 1-\delta$ so we get that
\begin{equation*}
    \frac{1}{a^t} = 1 + e^{2\eta(t+1)}\frac{\delta}{1 - \delta} \implies x_1^{2t+1} = \frac{1-\delta}{1 - \delta(1 - e^{2\eta(t+1)})}. \tag*{\qedhere}
\end{equation*}    
\end{proof}

Clearly, we also have $x_2^{2t+1} = 1 - x_1^{2t+1}$.   
Due to symmetry we obtain that $ y_2^{2t+1} = x_1^{2t+1}$; thus, we have obtained a closed form for the dynamics. 
The proof is then completed by the next lemma.

\begin{lemma}\label[lemma]{lem:regret_done}
For  sufficiently small $\delta$ and sufficiently large $\xi$, the regret of the algorithm for the row player against the fixed strategy $x = (0,1)$, up until time $T$ is $\Omega(T)$.  

\begin{proof}
For a given $T$, we compute the total payoff of the row player for the first $2T$ iterations when both players use FLBR. Since at the even steps of this process, the strategy of both players is $(0, 1)$, we get:

 \begin{align*}
\sum_{t=0}^{2T-1} {x^t}^\top  R y^t &= T \cdot (0,1)^\top  R (0,1) + \sum_{t=0}^{T-1} {x^{2t+1}}^\top  R y^{2t+1} \\
&= T + \sum_{t=0}^{T-1}(a^t, 1-a^t)^\top  R (1-a^t, a^t) \\
&= T + \sum_{t=0}^{T-1}(a^t, 1-a^t)^\top  (1 -2a^t, -1 + 2a^t) \\
&= T + \sum_{t=0}^{T-1}a^t - 2(a^t)^2 - 1 + 2a^t + a^t - 2(a^t)^2 \\
&= \sum_{t=0}^{T-1} 4a^t(1  - a^t),
\end{align*}
where once again we set $x_1^{2t+1} = a^t$. 

Next, we compute the payoff of the fixed strategy $x^* = (0,1)$ for the row player, against the column player playing in each iteration the FLBR strategy $y^t$ as computed by the previous analysis. This is equal to: 
\begin{align*}
\sum_{t=0}^{2T-1} {x^t}^\top  R y^t &= T \cdot (0,1)^\top  R (0,1) + \sum_{t=0}^{T-1} (0,1)^\top  R y^{2t+1} \\
&= T + \sum_{t=0}^{T-1}(0,1)^\top  (1 -2a^t, -1 + 2a^t) \\
&= \sum_{t=0}^{T-1} 2a^t.
\end{align*}
Hence, the regret for the row player when choosing her FLBR strategy against $x^*$ is
\[
    \text{Reg}_{\text{FLBR}} \ge \sum_{t=0}^{T-1} 2a^t - \sum_{t=0}^{T-1} 4a^t(1  - a^t) = \sum_{t=0}^{T-1} 2a^t(2a^t - 1).
\]
To upper bound the expression we use that $a^T = 1/2$. Hence we have that 
\begin{align*}
    \frac{1 - \delta(1 - e^{2\eta(T+1)})}{1-\delta} &= 2  \implies\\
    \delta e^{2\eta(T+1)} &= 1-\delta  \implies\\
    2\eta(T+1) &= \ln\Big(\frac{1-\delta}{\delta}\Big). 
\end{align*}
Thus, up to time $\lceil\frac{T - 1}{2}\rceil$ we have that
\[
a^t \ge \frac{1-\delta}{1 - \delta\left(1 - \sqrt{\frac{1-\delta}{\delta}}\right)} = \frac{1-\delta}{1 - \delta + \sqrt{\delta - \delta^2}} .
\]
For $\delta \to 0$ the expression tends to 1 so there is a sufficiently small $\delta$
 such that $a^t \ge .95$ for $t \le \lceil\frac{T+1}{2}\rceil$. Piecing everything together we get that
 \begin{align*}
     \text{Reg}_{\text{FLBR}} &\ge  \sum_{t=0}^{T-1} 2a^t(2a^t - 1) \\
     &\ge  \sum_{t=0}^{\lceil\frac{T - 1}{2}\rceil} 2a^t(2a^t - 1). \\
     \text{Reg}_{\text{FLBR}} &\ge  0.855 \cdot T \ \text{over } 2T \text{ rounds,} 
 \end{align*}
 which completes the proof.   
\end{proof}
\end{lemma}

\section{Experimental evaluation}\label{sec:exper}

Experimentally, the method had already seemed to be promising in \cite{FMPV22}, where it was compared against the Optimistic Multiplicative Weights Update (OMWU) method. However, given that OMWU does not perform so well in practice, here we provide comparisons of FLBR against Optimistic Gradient Descent-Ascent (OGDA), which is among  the fastest and state of the art last-iterate methods for bilinear games \citep{daskalakis2018training}.

We have performed 2 main types of comparisons. One class of experiments concerns random games, and more specifically payoff matrices that are populated from a standard Gaussian distribution. We have also sought additional games that are more structured and far from random. To that end, we have used the generalized Rock Paper Scissors (RPS) game of higher dimensions, and also some classes of other symmetric zero-sum games. 
In addition to these, we also consider a game discussed in \Cref{subsec:forget} that has to do with the notion of forgetfulness. 
In our experiments we have used both the completely uniform initialization so as to be consistent with our theory (\Cref{assumption:1}), but we have also experimented other initializations. E.g. for the RPS games, where the uniform profile is an exact equilibrium, we have started the method from a random mixed strategy profile.

\subsection{Implementation aspects and variants of FLBR} In the initial paper of \cite{FMPV22}, the method was tested for some values of $\eta$ and $\xi$, but still keeping $\xi$ to the same value throughout all the iterations. We also provide such experiments, but given \Cref{rem:xi}, in order to be more consistent with our theoretical analysis, we have additionally experimented with using different values of $\xi$ in the 2 phases of the trajectory (i.e., a first value for the initial convergence to an approximate equilibrium and a different one for the convergence to the exact solution).
To begin with, a simple first idea is to use essentially an infinite $\xi$, which according to \Cref{fact:best-response} translates to picking a best response directly in the intermediate step. According to our theory, this should lead to a fast initial convergence to an approximate equilibrium. However, this may be followed by a failure to contract inside the equilibrium region, given that $\eta\xi > 1$. We observe this phenomenon in the classical Rock-Paper-Scissors game (\Cref{fig:rps}) where we compare FLBR with OGDA, running both algorithms with $\eta = 0.1$. We can see there that this variant of FLBR gets stuck at an accuracy of 0.01 without making further progress.

\begin{figure}
    \centering
    \begin{minipage}{0.45\textwidth}
    \centering
    \includegraphics[width=0.8\linewidth]{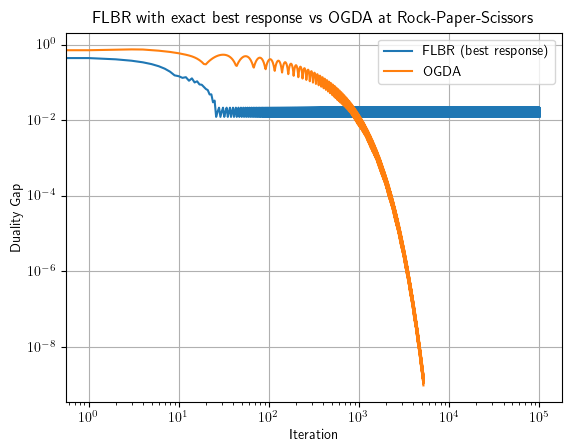}
    \caption{FLBR gets stuck when picking a best response in the intermediate step.}
    \label{fig:rps}
    \end{minipage}\hfill
    \begin{minipage}{0.45\textwidth}
            \centering
    \includegraphics[width=0.8\linewidth]{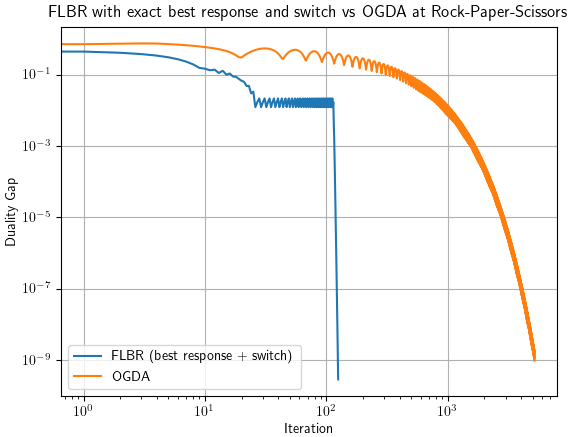}
    \caption{Convergence acceleration when switching to a finite $\xi$.}
    \label{fig:rps2}
    \end{minipage}
\end{figure}

To fix the above issue convergence we must define a criterion to switch in the intermediate step, from a best response selection to some finite $\xi$. For this we used $\xi = 100$, based on the work of \cite{FMPV22}, and given also that we did not observe any improvements for other values of $\xi$. 
We therefore propose a simple heuristic that implements the switch: We start the method by using a best response strategy in the intermediate step, and if in the last 200 iterations the duality gap has not been reducing, we modify the intermediate step update by using $\xi=100$. In \Cref{fig:rps2}, we see that the algorithm progresses, and it is actually much faster than OGDA.

Clearly, this experiment was just for a single $3\times 3$ game, which is quite small. As a next step, we therefore experimented with the standard generalization of the RPS game to higher dimensions, referred to as generalized RPS games.
For these games, to have a fair comparison against OGDA, we also fine-tuned the learning rate of OGDA to $\eta = 0.01$ because it failed to converge with a larger step. Now, the image in \Cref{fig:rps101} is even clearer. We can see that the FLBR algorithm gets stuck during its first phase, but after the switch it starts making progress again, without jumping to the Nash equilibrium immediately. Most importantly, we also see that it outperforms OGDA.

\begin{figure}
    \centering
    \begin{minipage}{0.45\textwidth}
            \centering
            \includegraphics[width=0.8\linewidth]{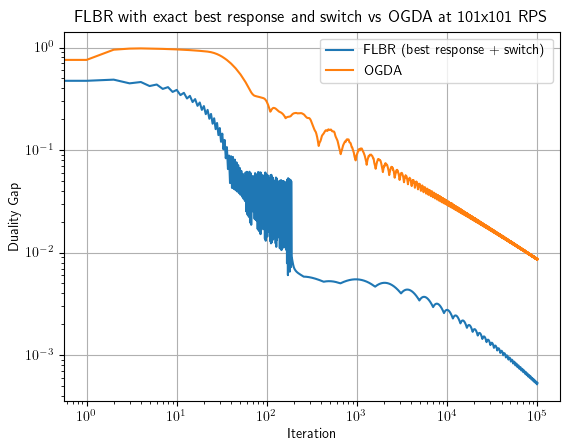}
    \caption{Comparisons for higher dimensional RPS games.}
    \label{fig:rps101}
    \end{minipage}
    \hfill
    \begin{minipage}{0.45\textwidth}
    \centering
    \includegraphics[width=0.8\linewidth]{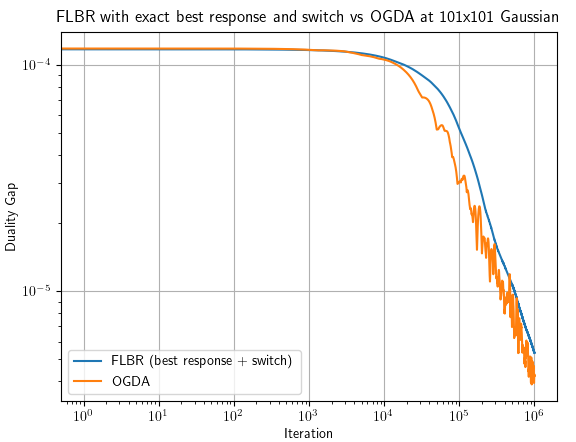}
    \caption{Comparisons with random games of size $101\times 101$.}
    \label{fig:flbr_gauss_final}
    \end{minipage}
\end{figure}

In order to move away from the more structured RPS games, we also considered random games, where we sample each entry of the payoff matrix independently from the standard Gaussian and normalize the entries to $[0,1]$.
The comparison between OGDA and FLBR is shown in \Cref{fig:flbr_gauss_final}, where we observe two things. First of all, that the algorithms are essentially matched in performance, and second, that the switch in FLBR never actually occurred. 

\subsection{Additional experiments}\label{app:exp}
Our additional experiments follow a similar line of thought as the ones presented above. In this subsection we focus on the simpler version of FLBR where we use the same value of $\xi$ throughout all iterations in contrast to the experiments in the first part of the experimental section, where we followed the theoretical analysis and \Cref{rem:xi}. We start with random Gaussian games, where OGDA has a slight advantage over this version of FLBR and then we present constructions of not so random games, with some inherent structure, which slow down OGDA but not FLBR.

\paragraph{Initializations} As stated in \Cref{assumption:1}, for the theoretical part of the paper we always initialize FLBR with the uniform distribution, i.e. $x_i =y_i = 1/n$. Here we deem useful to explore more options. Specifically, we test the following starting points:
\begin{itemize}
    \item Uniform distribution.
    \item Almost pure strategy profile: $x_1 = y_1 = 1 - 1/n$ and $x_i = y_i = \frac{1}{n(n-1)}$.
    \item Random: we sample $x, y$ from $U(0,1)$ and then rescale them.
    \item Sequential: $x_i = y_i = \frac{2i}{n(n+1)}$.
\end{itemize}

\paragraph{Assumptions on $\eta$, $\xi$} In the theoretical part of the paper, we did not need any major assumption for $\eta$ and $\xi$ (apart from $\xi$ being large enough) for reaching an approximate equilibrium. However, for the convergence to the exact solution, we needed to use $\eta\xi<1$, to prove \Cref{theorem: geometric rate with uniqueness}. In our experiments, we also tested combinations of values for these two parameters that violate this condition. What we observe experimentally is that the method can perform well even without this constraint (recall e.g., that in the first line of experiments, we also used $\xi=100$ and values of $\eta$ for which $\eta\xi>1$),  but certainly not for any arbitrary combination.

\subsubsection{Random games}
In \Cref{fig:Gaussian-50,fig:Gaussian-500} we see the comparisons between FLBR and OGDA for further Gaussian games of dimensions 50 and 500, where each entry of the payoff matrix is filled by sampling from the Gaussian distribution. What we observe is that OGDA performs better (as expected by the existing smoothed analysis for OGDA) and that FLBR is close but on average slower than OGDA.

\begin{figure}[H]
\centering
    \begin{minipage}{0.33\textwidth}
        \centering
        \includegraphics[width=\textwidth]{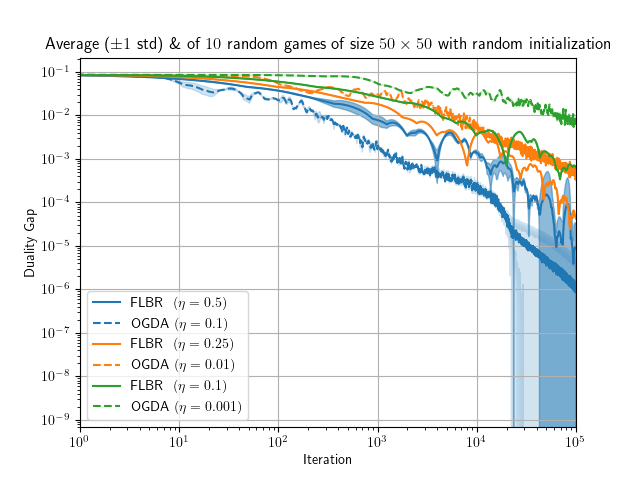}
        \label{fig:G50rng}
    \end{minipage}\hfill
    \begin{minipage}{0.33\textwidth}
        \centering
        \includegraphics[width=\textwidth]{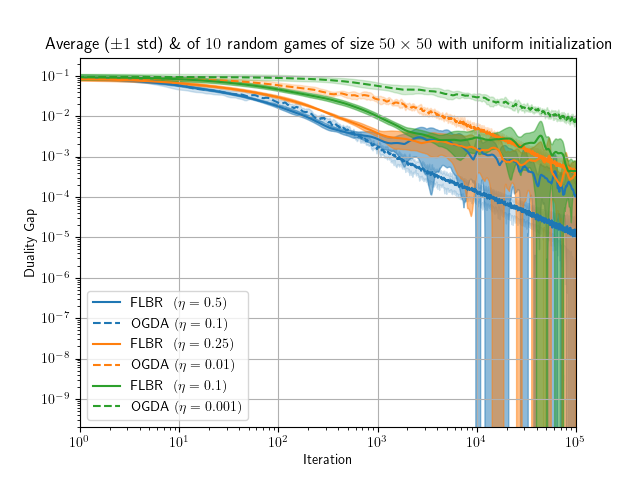}
        \label{fig:G50uni}
    \end{minipage}\hfill
        \begin{minipage}{0.33\textwidth}
        \centering
        \includegraphics[width=\textwidth]{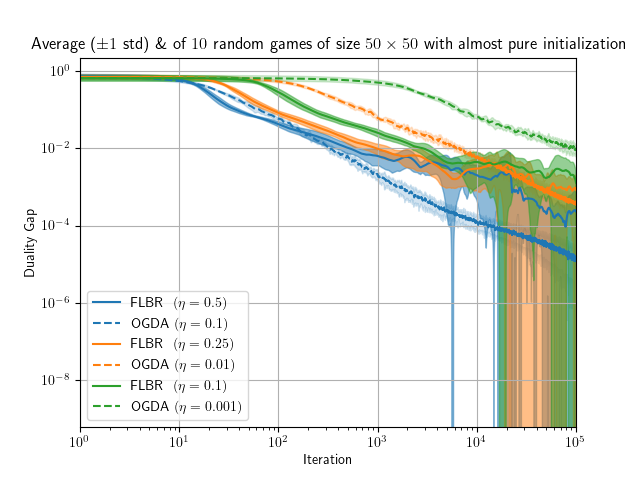}
        \label{fig:G50one}
    \end{minipage}
    \caption{Random Gaussian $50\times 50$ games with various initializations.}
    \label{fig:Gaussian-50}
\end{figure}
\begin{figure}[H]
\centering
    \begin{minipage}{0.33\textwidth}
        \centering
        \includegraphics[width=\textwidth]{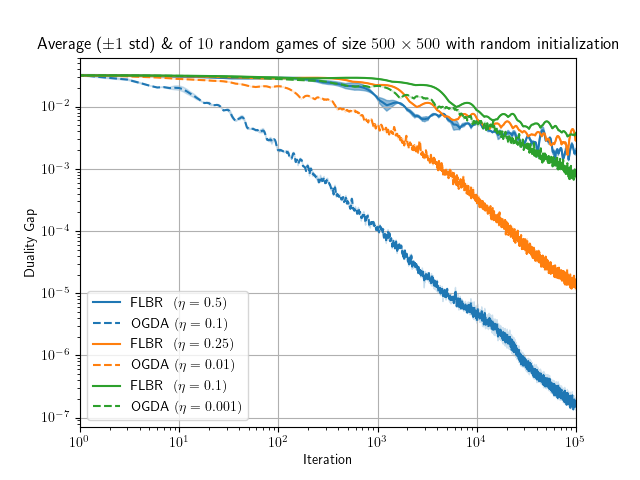}
        \label{fig:G500rng}
    \end{minipage}\hfill
    \begin{minipage}{0.33\textwidth}
        \centering
        \includegraphics[width=\textwidth]{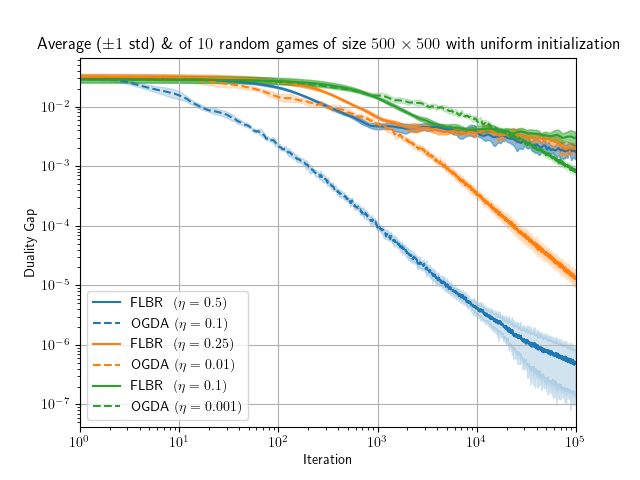}
        \label{fig:G500uni}
    \end{minipage}\hfill
        \begin{minipage}{0.33\textwidth}
        \centering
        \includegraphics[width=\textwidth]{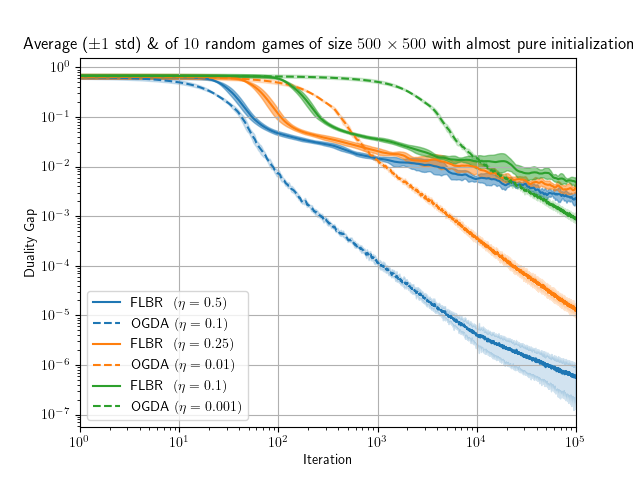}
        \label{fig:G500one}
    \end{minipage}
     \caption{Random Gaussian $500\times 500$ games with various initializations.}
      \label{fig:Gaussian-500}
\end{figure}

\subsubsection{Structured games}

We have already presented our results on the Generalized Rock-Papers-Scissors game, which is arguably among the most famous zero-sum games. Here we also present comparisons using two more classes of more structured games.

First, we performed comparisons for games where the payoff matrix $R$ is of low rank. Such games differ from random games, where with high probability the matrix has full rank. 
We constructed matrices, where the rank is approximately 5-10\% of the dimension. 

Interestingly, what we observe in \Cref{fig:low-rank-50,fig:low-rank-500}, is that FLBR is performing better than OGDA. The figures depict the comparisons for $50\times 50$ games where the rank is $5$ and for $500\times 500$ games with rank equal to 25. An additional observation is that FLBR seems more robust against the various initializations that were used. For example OGDA, under the random and the uniform initialization does not converge for some choices of $\eta$.

\begin{figure}[H]
\centering
    \begin{minipage}{0.33\textwidth}
        \centering
        \includegraphics[width=\textwidth]{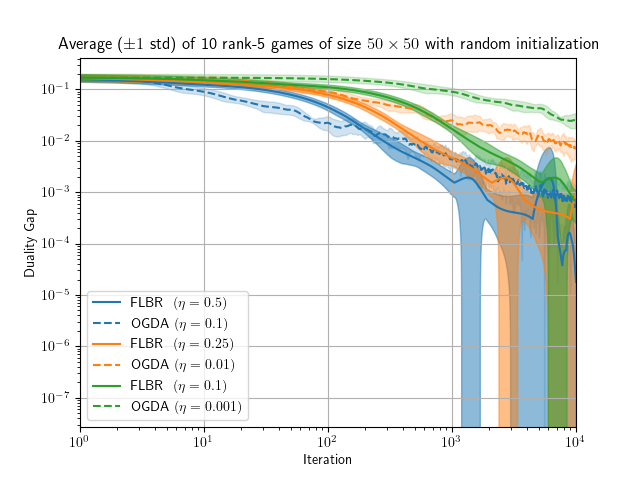}
    \end{minipage}\hfill
    \begin{minipage}{0.33\textwidth}
        \centering
        \includegraphics[width=\textwidth]{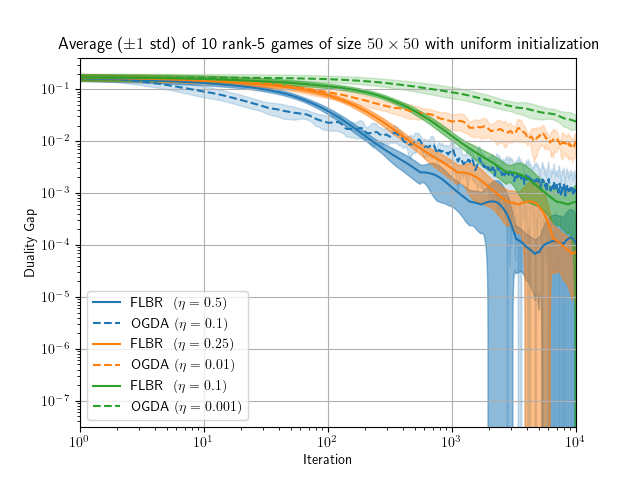}
    \end{minipage}\hfill
        \begin{minipage}{0.33\textwidth}
        \centering
        \includegraphics[width=\textwidth]{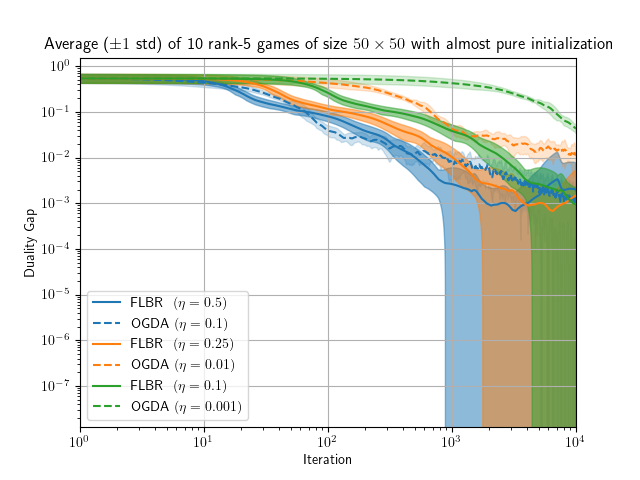}
    \end{minipage}
     \caption{Games with low rank payoff matrix of size $50\times 50$ with various initializations.}
     \label{fig:low-rank-50}
\end{figure}

\begin{figure}[H]
\centering
    \begin{minipage}{0.33\textwidth}
        \centering
        \includegraphics[width=\textwidth]{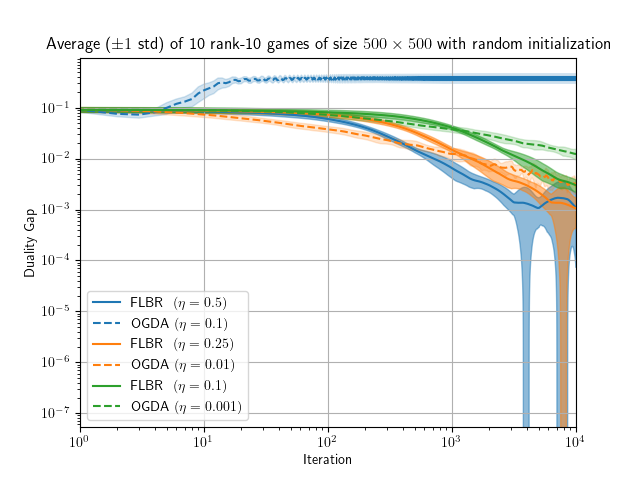}
    \end{minipage}\hfill
    \begin{minipage}{0.33\textwidth}
        \centering
        \includegraphics[width=\textwidth]{figures/extraexperiments/lowrank50_uni.png}
    \end{minipage}\hfill
        \begin{minipage}{0.33\textwidth}
        \centering
        \includegraphics[width=\textwidth]{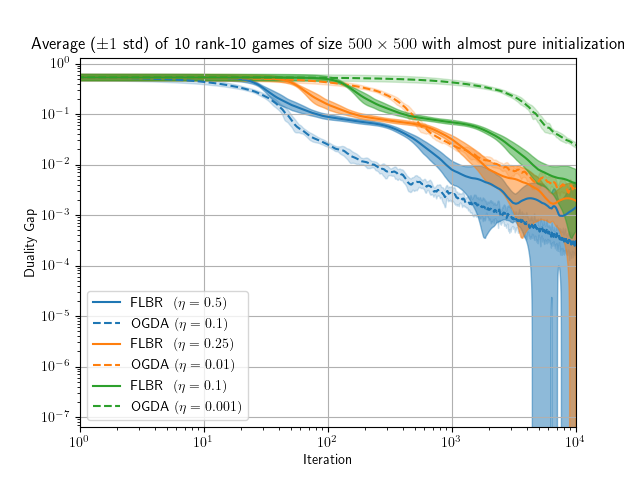}
    \end{minipage}
     \caption{Games with low rank payoff matrix of size $500\times 500$ with various initializations.}
         \label{fig:low-rank-500}
\end{figure}

Moving on,  we also tested a class of symmetric zero-sum games, which again is more structured than random games. In order to construct such families, we used the following formula for filling in the entries of the payoff matrix, where $P_{ij}^n$ is the entry of $P$ at $(i, j)$ when $P$ is $n\times n$. Here symmetry is enforced, given the dependence on $i+j$, 
\begin{equation}
\label{eq:P}
P^n_{ij} = \frac{1}{n} (i + j - 2) \text{ mod } n.    
\end{equation}

We note that for this class, we did not use the uniform initialization as this is an equilibrium of the game. What we observe in \Cref{fig:P50,fig:P500}, is that FLBR is having an advantage over OGDA for smaller dimensions, while OGDA becomes just slightly better, for the sequential and the almost pure initialization. The two methods have a very similar performance under the random initialization. Again, we observe a better robustness of FLBR with respect to the various initializations and the values of $\eta$. For example, we see that OGDA does not manage to converge for some of the choices used for $\eta$.

\begin{figure}[H]
\centering
    \begin{minipage}{0.33\textwidth}
        \centering
        \includegraphics[width=\textwidth]{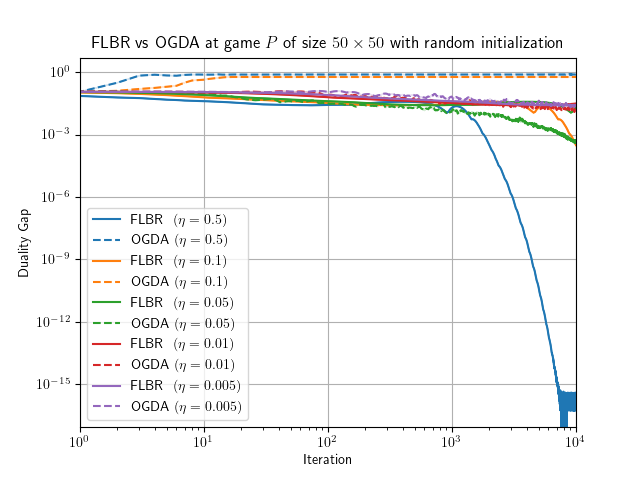}
    \end{minipage}\hfill
    \begin{minipage}{0.33\textwidth}
        \centering
        \includegraphics[width=\textwidth]{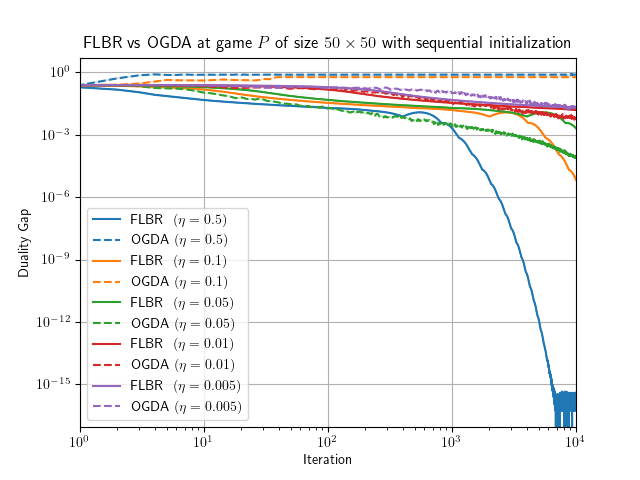}
    \end{minipage}\hfill
        \begin{minipage}{0.33\textwidth}
        \centering
        \includegraphics[width=\textwidth]{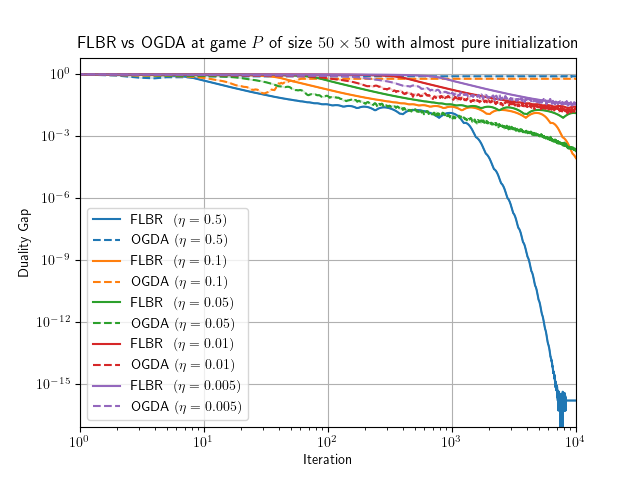}
    \end{minipage}
     \caption{Structured games defined by \Cref{eq:P}, of size $50\times 50$ with various initializations.}
     \label{fig:P50}
\end{figure}

\begin{figure}[H]
\centering
    \begin{minipage}{0.33\textwidth}
        \centering
        \includegraphics[width=\textwidth]{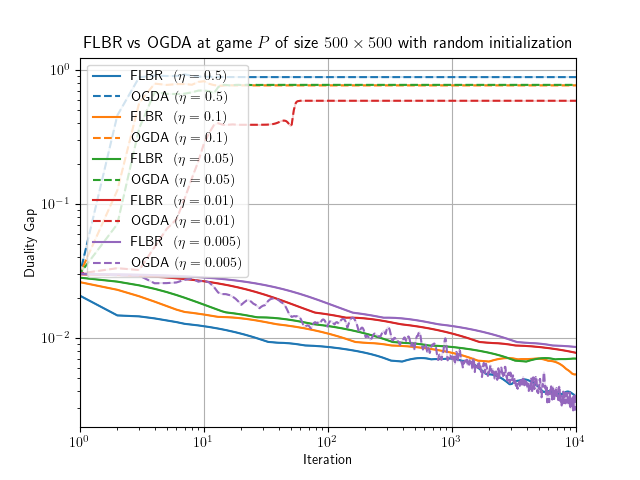}
    \end{minipage}\hfill
    \begin{minipage}{0.33\textwidth}
        \centering
        \includegraphics[width=\textwidth]{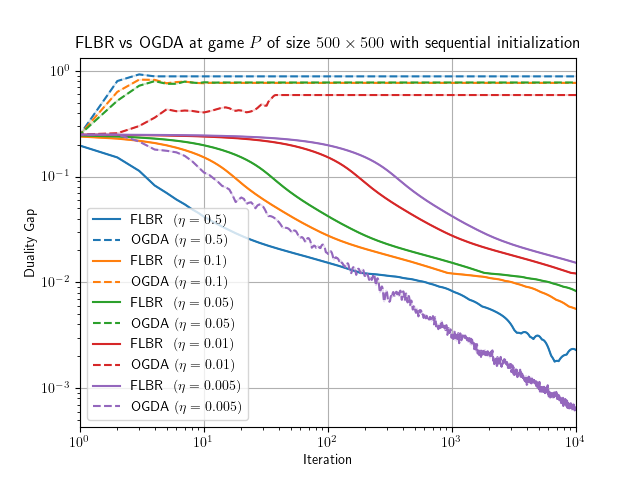}
    \end{minipage}\hfill
        \begin{minipage}{0.33\textwidth}
        \centering
        \includegraphics[width=\textwidth]{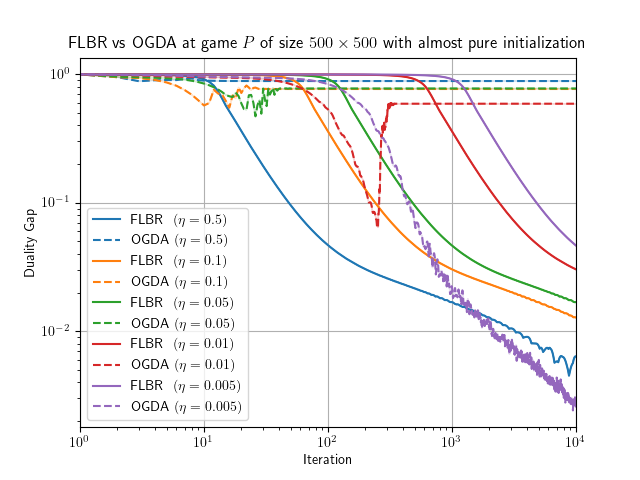}
    \end{minipage}
    \caption{Structured games defined by \Cref{eq:P}, of size $500\times 500$  with various initializations.}
    \label{fig:P500}
\end{figure}

\subsection{Experimental conclusions}

Our main findings and conclusions from all our experiments presented can be summarized as follows:
\begin{itemize}
    \item The FLBR method with the same $\xi$ throughout all iterations has a comparable, and sometimes better, performance against OGDA in more structured games. This is true for example both for generalized RPS games and for symmetric games. 
    \item For random Gaussian games, the methods are comparable up to a point, with OGDA being better both in the number of iterations needed and the time elapsed per game. Nevertheless, FLBR is still close enough. The good performance of OGDA in random games can be explained by \cite{anagno_smooth}, via last iterate analysis under the celebrated framework of \textit{smoothed analysis} \citep{ST04}.
    \item Most importantly, our heuristic with the switch in the values of $\xi$ seems more robust and attains the best of both worlds, in the sense that it matches the performance of OGDA in random games, and it outperforms it in more structured games. 
    \item Finally, apart from the number of iterations shown in the previous figures, we present some indicative time comparisons between FLBR and OGDA in \Cref{tab:timemeasurementsgaussian} for a random Gaussian game. Again the conclusion remains the same, that FLBR matches and even outperforms OGDA.  
\end{itemize}

\begin{table}[H]
\centering
\begin{minipage}{\linewidth}
\centering
\caption{Comparison in Gaussian games}
\label{tab:timemeasurementsgaussian}
\begin{tabular}{lcccc}
\toprule
& \multicolumn{4}{c}{Time (sec) to accuracy} \\
\cmidrule(lr){2-5}
     & $10^{-2}$  & $10^{-3}$ & $10^{-4}$ & $10^{-5}$ \\
\midrule
OGDA & $4.5 \cdot 10^{-5}$ & 0.86 & 19.22 & 168.35 \\
FLBR & $7.7 \cdot 10^{-5}$ & 0.77 & 3.81 & 33.12 \\
\bottomrule
\end{tabular}
\end{minipage}\hfill
\end{table}

Overall, even though the theoretical analysis of FLBR comes with the caveat of game-dependent parameters in its geometric convergence rate, the experiments reveal a competitive performance against OGDA. One more conclusion that arises from the experiments (see \Cref{fig:rps,fig:rps2}) is that FLBR seems to exhibit better robustness when varying $\eta$, unlike OGDA. We therefore conclude that the combination of different learning rate parameters, $\eta$ and $\xi$, in FLBR can be viewed as a promising direction that could motivate further future works.
\section{Discussion}
The goal of this work was to provide a more complete treatment, both theoretically and experimentally, of the FLBR method. In doing so, we also answered an open question from \cite{FMPV22}, and established concrete rates of convergence. We conclude our work with a discussion on the recently introduced notion of {\it forgetfulness}.

\subsection{Forgetfulness}
\label{subsec:forget}
In a very recent work, \cite{caiforget} provided further insights on the performance of OMWU and related dynamics, as compared to OGDA. Their work was motivated  by \cite{PPSC23}, where analogous intuitions were given for the fictitious play algorithm. 
In a nutshell, \cite{caiforget} attributed the cause of relatively slow convergence of OMWU to a notion they term ``forgetfulness". Although they did not provide a formal definition, intuitively, a method that is not forgetful allows the produced strategies to get stuck in almost the same profile over many iterations, which slows down convergence.

They designed a specific $2\times 2$ game and showed that this slow behavior can occur under OMWU, whereas OGDA does not exhibit the same issues. In \Cref{app:forget}, we extend their experiment and compare FLBR with OGDA in the same game. The main takeaway is that FLBR does not suffer from slow convergence. On the contrary, it even outperforms OGDA. We refer to \Cref{app:forget} for further details.

Overall, even though this was only one example, it conveys the intuition that the intermediate step of FLBR, using large $\xi$, has a particular effect in the dynamics: it makes the algorithm forgetful, and thus faster, albeit with the cost of adding regret, as shown in \Cref{sec:regret}. In general, we believe that the topic of forgetfulness deserves further exploration for learning dynamics.

\subsection{Future work}
As a step towards further explorations for the performance of FLBR, it would be interesting to study if our results generalize beyond bilinear payoffs to other classes of games. A first step would be to investigate (classes of ) convex-concave games. Other games of interest include potential games or games of low rank (i.e. having low rank on the sum of the two payoff matrices).

Another interesting question is whether one can have last-iterate convergence rates with game-independent. Both the OMWU method \citep{Wei2021LinearLC} and FLBR attain rates where the parameters depend on the payoff matrices and not just the dimension of the game.

\paragraph{Acknowledgments}
This work has been partially supported by project MIS 5154714 of the National Recovery and Resilience Plan Greece 2.0 funded by the European Union under the NextGenerationEU Program.
It has also been supported by the H.F.R.I call "Basic research Financing (Horizontal support of all Sciences)" under the National Recovery and Resilience Plan Greece 2.0 funded by the European Union - NextGenerationEU (H.F.R.I. Project Number: 15877) and it has been partially supported by the research
project ”Learning approaches for solution concepts in games
(LASCON)” of the internal grants 2022 of the ICS-FORTH. Also partially supported by the EU under Horizon Europe's TwinODIS project(No. 101160009).

\bibliographystyle{plainnat}
\bibliography{references}

@inproceedings{FMPV22,
  author       = {Michail Fasoulakis and
                  Evangelos Markakis and
                  Yannis Pantazis and
                  Constantinos Varsos},
  title        = {Forward Looking Best-Response Multiplicative Weights Update Methods
                  for Bilinear Zero-sum Games},
  booktitle    = {International Conference on Artificial Intelligence and Statistics,
                  {AISTATS} 2022},
  volume       = {151},
  pages        = {11096--11117},
  year         = {2022}
}

@inproceedings{BP18,
  author       = {James P. Bailey and
                  Georgios Piliouras},
  title        = {Multiplicative Weights Update in Zero-Sum Games},
  booktitle    = {Proceedings of the 2018 {ACM} Conference on Economics and Computation, 2018},
  pages        = {321--338},
  publisher    = {{ACM}},
  year         = {2018}
}

@inproceedings{Sokota+23,
  author       = {Samuel Sokota and
                  Ryan D'Orazio and
                  J. Zico Kolter and
                  Nicolas Loizou and
                  Marc Lanctot and
                  Ioannis Mitliagkas and
                  Noam Brown and
                  Christian Kroer},
  title        = {A Unified Approach to Reinforcement Learning, Quantal Response Equilibria,
                  and Two-Player Zero-Sum Games},
  booktitle    = {The Eleventh International Conference on Learning Representations,
                  {ICLR} 2023},
  year         = {2023}
}

@inproceedings{LiuOYZ23,
  author       = {Mingyang Liu and
                  Asuman E. Ozdaglar and
                  Tiancheng Yu and
                  Kaiqing Zhang},
  title        = {The Power of Regularization in Solving Extensive-Form Games},
  booktitle    = {The Eleventh International Conference on Learning Representations,
                  {ICLR} 2023},
  year         = {2023}
}

@inproceedings{FL0H25,
  author       = {Zijian Fang and
                  Zongkai Liu and
                  Chao Yu and
                  Chaohao Hu},
  title        = {Rapid Learning in Constrained Minimax Games with Negative Momentum},
  booktitle    = {AAAI-25, Sponsored by the Association for the Advancement of Artificial
                  Intelligence},
  pages        = {16541--16549},
  year         = {2025}
}

@inproceedings{MPP18,
  author       = {Panayotis Mertikopoulos and
                  Christos H. Papadimitriou and
                  Georgios Piliouras},
  title        = {Cycles in Adversarial Regularized Learning},
  booktitle    = {Proceedings of the Twenty-Ninth Annual {ACM-SIAM} Symposium on Discrete
                  Algorithms, {SODA} 2018},
  pages        = {2703--2717},
  publisher    = {{SIAM}},
  year         = {2018}
}

@inproceedings{GPMXWOCB14,
  author    = {Ian J. Goodfellow and Jean Pouget-Abadie and Mehdi Mirza and Bing Xu and David Warde-Farley and Sherjil Ozair and Aaron Courville and Yoshua Bengio},
  title     = {Generative {A}dversarial {N}ets},
  booktitle = {Proceedings of Annual Conference
               on Neural Information Processing Systems (NIPS '14)},
  pages     = {2672--2680},
  year      = {2014}
}

@inproceedings{COZ22,
  author    = {Cai, Yang and Oikonomou, Argyris and Zheng, Weiqiang},
  booktitle = {Advances in Neural Information Processing Systems},
  pages     = {33904--33919},
  title     = {Finite-Time Last-Iterate Convergence for Learning in Multi-Player Games},
  volume    = {35},
  year      = {2022}
  }

@inproceedings{GorbunovTG22,
  author    = {Gorbunov, Eduard and Taylor, Adrien and Gidel, Gauthier},
  booktitle = {Advances in Neural Information Processing Systems},
  pages     = {21858--21870},
  title     = {Last-Iterate Convergence of Optimistic Gradient Method for Monotone Variational Inequalities},
  volume    = {35},
  year      = {2022}
}

@inproceedings{Wei2021LinearLC,
  author    = {Chen-Yu Wei and Chung-Wei Lee and Mengxiao Zhang and Haipeng Luo},
  title     = {Linear Last-iterate Convergence in Constrained Saddle-point Optimization},
  booktitle = {Proceedings of the 9th International Conference on Learning Representations {ICLR} '21},
  year      = {2021}
}

@inproceedings{caiforget,
  author    = {Cai, Yang and Farina, Gabriele and Grand-Cl\'{e}ment, Julien and Kroer, Christian and Lee, Chung-Wei and Luo, Haipeng and Zheng, Weiqiang},
  booktitle = {Advances in Neural Information Processing Systems},
  pages     = {23406--23434},
  title     = {Fast Last-Iterate Convergence of Learning in Games Requires Forgetful Algorithms},
  volume    = {37},
  year      = {2024}
}

@article{CLS21,
  author  = {Michael B. Cohen and
             Yin Tat Lee and
             Zhao Song},
  title   = {Solving Linear Programs in the Current Matrix Multiplication Time},
  journal = {J. {ACM}},
  volume  = {68},
  number  = {1},
  pages   = {3:1--3:39},
  year    = {2021}
}

@inproceedings{BrandLLSS0W21,
  author    = {Jan van den Brand and
               Yin Tat Lee and
               Yang P. Liu and
               Thatchaphol Saranurak and
               Aaron Sidford and
               Zhao Song and
               Di Wang},
  title     = {Minimum cost flows, MDPs, and $L_1$ regression
               in nearly linear time for dense instances},
  booktitle = {53rd Annual {ACM} {SIGACT} Symposium on Theory of Computing (STOC '21)},
  pages     = {859--869},
  publisher = {{ACM}},
  year      = {2021}
}

@article{hoda2010smoothing,
  title     = {Smoothing techniques for computing {N}ash equilibria of sequential games},
  author    = {Hoda, Samid and Gilpin, Andrew and Pena, Javier and Sandholm, Tuomas},
  journal   = {Mathematics of Operations Research},
  volume    = {35},
  number    = {2},
  pages     = {494--512},
  year      = {2010},
  publisher = {INFORMS}
}

@article{gilpin2012first,
  title     = {First-order algorithm with convergence for-equilibrium in two-person zero-sum games},
  author    = {Gilpin, Andrew and Pena, Javier and Sandholm, Tuomas},
  journal   = {Mathematical programming},
  volume    = {133},
  number    = {1},
  pages     = {279--298},
  year      = {2012}
}

@article{R51,
  title   = {An iterative method of solving a game},
  author  = {Julia Robinson},
  journal = {Annals of Mathematics},
  volume  = {54},
  number  = {2},
  pages   = {296--301},
  year    = {1951}
}

@article{brown1951iterative,
  title   = {Iterative solution of games by fictitious play},
  author  = {Brown, George W},
  journal = {Act. Anal. Prod Allocation},
  volume  = {13},
  number  = {1},
  pages   = {374},
  year    = {1951}
}

@article{AHK12,
  author  = {Sanjeev Arora and
             Elad Hazan and
             Satyen Kale},
  title   = {The Multiplicative Weights Update Method: a Meta-Algorithm and Applications},
  journal = {Theory Comput.},
  volume  = {8},
  number  = {1},
  pages   = {121--164},
  year    = {2012}
}

@inproceedings{CJST19,
  author    = {Yair Carmon and
               Yujia Jin and
               Aaron Sidford and
               Kevin Tian},
  title     = {Variance Reduction for Matrix Games},
  booktitle = {Advances in Neural Information Processing Systems 32: Annual Conference
               on Neural Information Processing Systems},
  pages     = {11377--11388},
  year      = {2019}
}

@inproceedings{CJJS24,
  author    = {Yair Carmon and
               Arun Jambulapati and
               Yujia Jin and
               Aaron Sidford},
  title     = {A Whole New Ball Game: {A} Primal Accelerated Method for Matrix Games
               and Minimizing the Maximum of Smooth Functions},
  booktitle = {Proceedings of the 2024 {ACM-SIAM} Symposium on Discrete Algorithms,
               {SODA'24}},
  pages     = {3685--3723},
  publisher = {{SIAM}},
  year      = {2024}
}

@inproceedings{FarinaKS21,
  author    = {Gabriele Farina and
               Christian Kroer and
               Tuomas Sandholm},
  title     = {Faster Game Solving via Predictive Blackwell Approachability: Connecting
               Regret Matching and Mirror Descent},
  booktitle = {Thirty-Fifth {AAAI} Conference on Artificial Intelligence, {AAAI}},
  pages     = {5363--5371},
  year      = {2021}
}

@article{K76,
  title   = {The extragradient method for finding saddle points and other problems},
  author  = {Galina Korpelevich},
  journal = {Matecon},
  volume  = {12},
  pages   = {747--756},
  year    = {1976}
}

@article{P80,
  title   = {A modification of the {A}rrow-{H}urwicz method for search of saddle points},
  author  = {Leonid Denisovich Popov},
  journal = {Mathematical notes of the Academy of Sciences of the USSR},
  volume  = {28},
  pages   = {845--848},
  year    = {1980}
}

@inproceedings{daskalakis2018training,
  title     = {Training {GAN}s with Optimism},
  author    = {Constantinos Daskalakis and Andrew Ilyas and Vasilis Syrgkanis and Haoyang Zeng},
  booktitle = {Proceedings of the International Conference on Learning Representations (ICLR'18)},
  year      = {2018}
}

@inproceedings{DBLP:conf/aistats/LiangS19,
  author    = {Tengyuan Liang and
               James Stokes},
  title     = {Interaction Matters: {A} Note on Non-asymptotic Local Convergence
               of Generative Adversarial Networks},
  booktitle = {The 22nd International Conference on Artificial Intelligence and Statistics,
               {AISTATS} 2019},
  series    = {Proceedings of Machine Learning Research},
  volume    = {89},
  pages     = {907--915},
  year      = {2019}
}

@inproceedings{DBLP:conf/aistats/DiakonikolasDJ21,
  author    = {Jelena Diakonikolas and
               Constantinos Daskalakis and
               Michael I. Jordan},
  title     = {Efficient Methods for Structured Nonconvex-Nonconcave Min-Max Optimization},
  booktitle = {The 24th International Conference on Artificial Intelligence and Statistics,
               {AISTATS} 2021},
  volume    = {130},
  pages     = {2746--2754},
  year      = {2021}
}

@inproceedings{Azizian,
  author    = {Wa{\"{\i}}ss Azizian and
               Damien Scieur and
               Ioannis Mitliagkas and
               Simon Lacoste{-}Julien and
               Gauthier Gidel},
  title     = {Accelerating Smooth Games by Manipulating Spectral Shapes},
  booktitle = {The 23rd International Conference on Artificial Intelligence and Statistics,
               {AISTATS} 2020},
  volume    = {108},
  pages     = {1705--1715},
  year      = {2020}
}

@inproceedings{Daskalakis2019LastIterateCZ,
  author    = {Constantinos Daskalakis and
               Ioannis Panageas},
               title     = {Last-Iterate Convergence: Zero-Sum Games and Constrained Min-Max Optimization},
  booktitle = {10th Innovations in Theoretical Computer Science Conference, {ITCS}
               2019},
  volume    = {124},
  pages     = {27:1--27:18},
  year      = {2019}
}

@article{LY23,
  author     = {Haihao Lu and
                Jinwen Yang},
  title      = {On the Infimal Sub-differential Size of Primal-Dual Hybrid Gradient Method and Beyond},
  journal    = {CoRR},
  volume     = {abs/2206.12061},
  year       = {2023},
  eprinttype = {arXiv},
  pages      = {1--29}
}

@inproceedings{mokhtari2020unified,
  title     = {A unified analysis of extra-gradient and optimistic gradient methods for saddle point problems: Proximal point approach},
  author    = {Mokhtari, Aryan and Ozdaglar, Asuman and Pattathil, Sarath},
  booktitle = {International Conference on Artificial Intelligence and Statistics},
  pages     = {1497--1507},
  year      = {2020}
}

@inproceedings{HsiehIMM20,
  author    = {Hsieh, Yu-Guan and Iutzeler, Franck and Malick, J\'{e}r\^{o}me and Mertikopoulos, Panayotis},
  booktitle = {Advances in Neural Information Processing Systems},
  pages     = {16223--16234},
  title     = {Explore Aggressively, Update Conservatively: Stochastic Extragradient Methods with Variable Stepsize Scaling},
  volume    = {33},
  year      = {2020}
}

@inproceedings{GPD20,
  author    = {Golowich, Noah and Pattathil, Sarath and Daskalakis, Constantinos},
  booktitle = {Advances in Neural Information Processing Systems},
  pages     = {20766--20778},
  title     = {Tight last-iterate convergence rates for no-regret learning in multi-player games},
  volume    = {33},
  year      = {2020}
}

@inproceedings{DSZ21,
  author    = {Constantinos Daskalakis and
               Stratis Skoulakis and
               Manolis Zampetakis},
  title     = {The complexity of constrained min-max optimization},
  booktitle = {53rd Annual {ACM} {SIGACT} Symposium on Theory of Computing ({STOC} '21)},
  pages     = {1466--1478},
  publisher = {{ACM}},
  year      = {2021}
}

@inproceedings{PP24,
  author    = {Nikolas Patris and
               Ioannis Panageas},
  title     = {Learning Nash Equilibria in Rank-1 Games},
  booktitle = {Proceedings of the Twelfth International Conference on Learning Representations (ICLR'24)},
  year      = {2024}
}

@inproceedings{anagnostides2022last,
  author    = {Ioannis Anagnostides and
               Ioannis Panageas and
               Gabriele Farina and
               Tuomas Sandholm},
  title     = {On Last-Iterate Convergence Beyond Zero-Sum Games},
  booktitle = {International Conference on Machine Learning, {ICML} 2022},
  volume    = {162},
  pages     = {536--581},
  year      = {2022}
}

@book{van1991stability,
  title     = {Stability and perfection of Nash equilibria},
  author    = {Van Damme, Eric},
  volume    = {339},
  year      = {1991}
}

@article{NTHW21,
  author  = {Nakagawa, Kenji and Takei, Yoshinori and Hara, Shin-ichiro and Watabe, Kohei},
  title   = {Analysis of the Convergence Speed of the {A}rimoto-{B}lahut Algorithm by the Second-Order Recurrence Formula},
  year    = {2021},
  volume  = {67},
  number  = {10},
  journal = {IEEE Transactions on Information Theory},
  pages   = {6810--6831}
}

@inproceedings{PPSC23,
  author    = {Panageas, Ioannis and Patris, Nikolas and Skoulakis, Stratis and Cevher, Volkan},
  booktitle = {Advances in Neural Information Processing Systems},
  pages     = {32401--32423},
  title     = {Exponential Lower Bounds for Fictitious Play in Potential Games},
  volume    = {36},
  year      = {2023}
}

@article{freund1999adaptive,
  title     = {Adaptive game playing using multiplicative weights},
  author    = {Freund, Yoav and Schapire, Robert E},
  journal   = {Games and Economic Behavior},
  volume    = {29},
  number    = {1-2},
  pages     = {79--103},
  year      = {1999}
}

@article{nemirovski2004prox,
  title     = {Prox-method with rate of convergence {O}(1/t) for variational inequalities with Lipschitz continuous monotone operators and smooth convex-concave saddle point problems},
  author    = {Nemirovski, Arkadi},
  journal   = {SIAM Journal on Optimization},
  volume    = {15},
  number    = {1},
  pages     = {229--251},
  year      = {2004}
}

@article{ST04,
  author  = {Daniel A. Spielman and
             Shang{-}Hua Teng},
  title   = {Smoothed analysis of algorithms: Why the simplex algorithm usually
             takes polynomial time},
  journal = {J. {ACM}},
  volume  = {51},
  number  = {3},
  pages   = {385--463},
  year    = {2004}
}

@inproceedings{anagno_smooth,
  author    = {Anagnostides, Ioannis and Sandholm, Tuomas},
  booktitle = {Advances in Neural Information Processing Systems},
  pages     = {125839--125878},
  title     = {Convergence of ${\log}(1/\epsilon)$ for Gradient-Based Algorithms in Zero-Sum Games without the Condition Number: A Smoothed Analysis},
  volume    = {37},
  year      = {2024}
}

\appendix
\appendix
\crefalias{section}{appendix}
\section{Further related work}
\label{appsec:relwork}

We discuss here some additional research directions. 
Given the connection with linear programming, a variety  of algorithms focus on optimization and LP-based methods for zero-sum games. 
Theoretically, the best guarantees for solving the corresponding linear program can be found in \citet{CLS21} and \citet{BrandLLSS0W21}.
Regarding other methods, \citet{hoda2010smoothing} use Nesterov's first order smoothing techniques to achieve an $\varepsilon$-equilibrium in $O(1/\varepsilon)$ iterations, with the added benefits of simplicity and rather low computational cost per iteration. Following up on that work,  \citet{gilpin2012first} propose an iterated version of Nesterov's smoothing technique, which runs within $O(\frac{||A||}{\delta(A)}\cdot \ln(1/\varepsilon)) $ iterations. This is a significant improvement, with the caveat that the complexity depends on a condition measure $\delta(A)$, with $A$ being the payoff matrix. 

Furthermore, further results have been obtained regarding the design of algorithms with convergence guarantees for extensive form games. Although such games are not within the scope of our work, the techniques could prove useful for our restricted class of normal-form zero-sum games. Some of the main ideas that have been exploited in this literature concern regularization, see e.g. \cite{Sokota+23,LiuOYZ23} and negative momentum \cite{FL0H25}.

\section{Missing proofs from Section 2}
\begin{proof}[Proof for \Cref{lem:norm to Duality gap}]
We have that for any $i$, 
\begin{align*}
e^\top _iRy - e^\top _iRy^{*} &= \sum_j R_{ij}\cdot y_j - \sum_j R_{ij}\cdot y^{*}_j \\
&=  \sum_j R_{ij}\cdot (y_j-y^{*}_j) \\
&\leq \sum_j |R_{ij}\cdot (y_j-y^{*}_j)|\\
&=\sum_j R_{ij}\cdot |(y_j-y^{*}_j)| \\
&\leq \sum_j |(y_j-y^{*}_j)|\\
&=  ||y-y^{*}||_1.
\end{align*}
Thus, if $b = \arg \max_i e^\top _iRy$, then $\max_i e^\top _iRy = e^\top _bRy \leq ||y-y^{*}||_1 + e^\top _bRy^{*} \leq ||y-y^{*}||_1 + v$.
The second part of the lemma follows in a similar manner.
\end{proof}

\section{Forgetfullness}\label{app:forget}

Here, we extend their experiment of \cite{caiforget} and compare OGDA and FLBR. The hard game instance of \citet{caiforget} for OMWU, parameterized by $\delta \in (0,1)$, is the following:
\begin{equation}\label{game:ad}
A_\delta =  \begin{bmatrix}
    \frac{1}{2} + \delta & \frac{1}{2}\\[.1cm] 0 & 1
\end{bmatrix}.
\end{equation}
The game has a unique equilibrium $(x^*, y^*)$ where $x^*_1= \frac{1}{1+\delta}$ and $y_1^* = \frac{1}{2(1+\delta)}$. In \Cref{fig:cai_1}, we highlight the behavior of FLBR and OGDA, with $\delta=10^{-2}$. The upper subfigures show how the first coordinate of $x^t$ and $y^t$ vary over time, starting from the initialization $(x^0, y^0) = (1/2, 1/2)$. In the lower subfigures, we show the decrease in the duality gap over the iterations. Note that at the equilibrium, $x^*_1$ is close to 1, whereas $y^*_1$ is close to 1/2, and thus close to $y^0_1$. 
What we observe is that FLBR behaves similarly to OGDA in the sense that it forgets quickly, regarding the coordinate $x^t_1$, and therefore  it avoids slowdowns. Furthermore, FLBR does not overshoot $y^t_1$. It increases $y^t_1$ marginally before reaching the actual equilibrium point, whereas OGDA overshoots. This fact justifies the much faster convergence time of FLBR compared to OGDA, as seen in the lower subfigures.

\begin{figure}
\centering
\includegraphics[width=\linewidth]{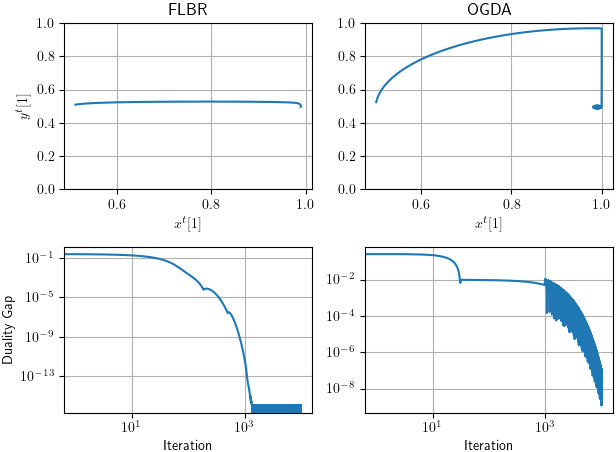}
\caption{FLBR vs OGDA in game $A_\delta$.}
\label{fig:cai_1}
\end{figure}
\end{document}